\definecolor{maroon}{HTML}{800000}
\newlength{\tablewidth}
  \let\everycr\CT@everycr
\apptocmd{\endalign}{\CT@end}{}{}
\definecolor{lightgray}{RGB}{220,220,220}
\DeclarePairedDelimiter\abs{\lvert}{\rvert}%
\DeclarePairedDelimiter\norm{\lVert}{\rVert}%
\let\oldabs\abs
\def\abs{\@ifstar{\oldabs}{\oldabs*}}
\let\oldnorm\norm
\def\norm{\@ifstar{\oldnorm}{\oldnorm*}}
\newtheorem{theorem}{Theorem}
\newtheorem{lemma}{Lemma}
\newtheorem{assumption}{Assumption}
\newcolumntype{L}[1]{>{\raggedright\let\newline\\\arraybackslash\hspace{0pt}}m{#1}}
\newcolumntype{C}[1]{>{\centering\let\newline\\\arraybackslash\hspace{0pt}}m{#1}}
\newcolumntype{R}[1]{>{\raggedleft\let\newline\\\arraybackslash\hspace{0pt}}m{#1}}
\newcolumntype{L}[1]{>{\raggedright\let\newline\\\arraybackslash\hspace{0pt}}m{#1}}
\newcolumntype{C}[1]{>{\centering\let\newline\\\arraybackslash\hspace{0pt}}m{#1}}
\newcolumntype{R}[1]{>{\raggedleft\let\newline\\\arraybackslash\hspace{0pt}}m{#1}} 
\begin{document}
\title{Sequential Search Models:  \\ A Pairwise Maximum Rank Approach\thanks{First version: April 28, 2021. I thank St\'{e}phane Bonhomme, Pradeep Chintagunta, Giovanni Compiani, Michael Dinerstein, Eyo Herstad, Ali Horta\c{c}su, Jonas Lieber, Dan Savelle, Myungkou Shin, Raluca Ursu and all participants in the consumer search digital seminar, the UChicago third year seminar, IO lunch, and metrics student group for many helpful comments. All errors are my own.}}
\author{ Jiarui Liu\thanks{University of Chicago. Email: jiarui@uchicago.edu }}

\date{this version: \today \\ \vspace{5mm}
PRELIMINARY AND INCOMPLETE}
\maketitle

\abstract{This paper studies sequential search models that (1) incorporate unobserved product quality, which can be correlated with endogenous observable characteristics (such as price) and endogenous search cost variables (such as product rankings in online search intermediaries); and (2) do not require researchers to know the true distribution of the match value between consumers and products. A likelihood approach to estimate such models gives biased results. Therefore, I propose a new estimator ---  pairwise maximum rank (PMR) estimator ---  for both preference and search cost parameters. I show that the PMR estimator is consistent using only data on consumers' search order among one pair of products rather than data on consumers' full consideration set or final purchase. Additionally, we can use the PMR estimator to test for the true match value distribution in the data. In the empirical application, I apply the PMR estimator to quantify the effect of rankings in Expedia hotel search using two samples of the data set, to which consumers are randomly assigned. I find the position effect to be \$0.11-\$0.36, and the effect estimated using the sample with randomly generated rankings is close to the effect estimated using the sample with endogenous rankings. Moreover, I find that the true match value distribution in the data is unlikely to be N(0,1). Likelihood estimation ignoring endogeneity gives an upward bias of at least \$1.17; misspecification of match value distribution as N(0,1) gives an upward bias of at least \$2.99.}

\normalsize
\doublespacing

\section{Introduction}

In many situations where people make choices, they might not have the information about all existing options for free. Acquiring and processing information can be costly in a wide range of empirical contexts: for example, school applications \citep{hoxbyturner}, housing and neighborhood \citep{bergmanchankapor2020}, hotels \citep{koulayev2014,gu2016,chenyao2017,ursu2018}, financial products \citep{hortacsu2004}, insurance products \citep{brown2002,honka2014}, credit market \citep{agarwal2020}, TV channels \citep{yao2017tv}, medical devices \citep{grennan2020}, automobiles \citep{gonzalezautomobile,yavorsky2021,murry2020}, other durable goods \citep{kim2010,kim2017,dongetal2020}, and so forth.

It is therefore important to take into account people's limited consideration and model their search process if we want to consistently estimate their preference and search costs in these scenarios. Consistent estimates of search costs allow us to quantify market frictions due to limited information. Moreover, consistent demand estimates are needed to analyze optimal firm behaviors, for example pricing and advertising decisions. Accurate demand and supply parameters are the basis for counterfactual social welfare analysis.

A large strand of literature models these empirical settings with sequential search and uses a likelihood approach to estimate preference and search cost parameters. These existing models assume away any unobserved product quality that could be correlated with observable characteristics (such as price) and search cost variables (such as product rankings in online search intermediaries), which causes endogeneity concerns. These models also assume that researchers know the true distribution of the match value between consumers and products. However, these assumptions can be violated in many empirical settings. 
 
This paper aims to address these issues by incorporating endogenous unobserved product quality into a canonical differentiated good sequential search model, as well as allowing researchers to be agnostic about the distribution of match value. A simulated likelihood estimation of such model gives biased estimates of both preference and search cost parameters. Therefore, I propose a new estimator --- pairwise maximum rank (PMR) estimator --- for consumers' preferences and search costs. I prove that this estimator is consistent under the more generalized model with endogeneity and mild assumptions on the match value distribution. I also show the performance of this estimator in monte carlo experiments.

The estimator is based on consumers' search behavior of an arbitrary pair of products. Intuitively, a consumer's search order among the product pair (i.e. which product out of the pair does this consumer search first) should on average be predicted by the consumer's search cost, as well as the observed and unobserved characteristics of the product pair. Consequently, consider any two consumers facing the same search cost and unobserved characteristics of the product pair. If their search orders among the product pair are different, then the difference should on average be explained by the observed characteristics of the product pair, which identifies the preference parameters of the observed characteristics. Similarly, comparing the search orders of any two consumers facing the same observed and unobserved product pair characteristics identifies the search cost parameters. 

The biggest advantage of the PMR estimator against a simulated likelihood estimator is that it is consistent when there is endogeneity concern. Endogeneity arises when unobserved quality of the product enters the consumer's utility of that product. The unobserved quality is likely to be correlated with observable characteristics of the product, such as price and advertising, because firms choose price and advertising potentially based on the product quality \citep{blp1995,shapiro2018,shapiro2020}. The unobserved product quality can also be correlated with consumers' search costs of that product. For example, higher quality products are likely to be positioned higher on a list page during online shopping, so consumers will have lower search costs for these products \citep{ursu2018}. Unobserved quality is often assumed away in existing sequential search models. Thus, likelihood estimations of these misspecified models without unobserved quality will lead to biased estimates of preference and search cost parameters, as I will show later in monte carlo simulations in Section \ref{sec:montecarlo}.   

A likelihood approach to estimate the sequential search model with unobserved quality is also problematic. The unobserved quality would have to be treated either as random effects or fixed effects. For random effects, we would have to assume a joint distribution of the unobserved quality and observed characteristics so that we can write out the likelihood function. These distributional assumptions are likely not well-founded by economic theory. Thus, most of current demand estimation literature treats the unobserved quality as fixed effects: it is allowed to be arbitrarily correlated with observed characteristics. But estimating such fixed effects with likelihood is extremely demanding in terms of computational resource and time when there are many products. 

The PMR estimator is able to treat the unobserved quality as fixed effects. The underlying assumption is that the unobserved quality of a product is the same for all consumers, which is consistent with most demand estimation literature. Thus, when we rank the search orders across consumers fixing a pair of products, the unobserved quality of the product pair is "differenced-out" because it is the same for all consumers. 

A second advanage of the PMR estimator is that it is robust to researchers' misspecification of the match value distribution. Current sequential search models have to assume that researchers know the true distribution of the match value error term in utility to make the likelihood estimation possible. \cite{yavorsky2021} show that misspecifying the distribution of match value leads to biased estimates of search costs, thus incorrect predictions of search behavior and welfare implication. I will corroborate their argument with monte carlo results in Section \ref{sec:montecarlo}. The PMR estimator is consistent as long as the match value distribution belongs to a large class of distributions, including all distributions assumed in existing literature so far, to the best of my knowledge. Moreover, a testable implication of the estimates can help reject false specifications of the match value distribution. This can also serve as a robustness check for assumptions of match value distributions in empirical settings.\footnote{More details are discussed in Section \ref{sec:nonparam}.}  

A third advantage of the PMR estimator is that it requires less data than the simulated likelihood estimator. The likelihood estimation often requires data on the full consideration set and the final purchase of each consumer for estimation and identification purpose. Yet the final purchase can be difficult to observe in online shopping contexts even if the full consideration set (clicking activity) is observed \citep{koulayev2014,brynjolfsson2010}. The full consideration set can be difficult to observe in most offline contexts, which prohibits the possibility of modeling consumer behavior as sequential search in many interesting economic contexts such as school and neighborhood choice. The PMR estimator, however, only requires data on the search order of an arbitrary pair of products, instead of final purchase or full consideration set data. This is because the PMR estimator only exploits variation across consumers within a product pair. 

A fourth advantage of the PMR estimator is that it is computationally lighter than the simulated likelihood estimator when there are many products. In order to compute the simulated likelihood, one would have to simulate and compute the reservation utilities of all products in all samples. When the number of products is very large, this will be computationally expensive. Thus, in practice researchers might choose to limit the total number of options available in the market for computational ease. Another reason for limiting the number of products is that the researchers might be unable to observe the characteristics or outcomes of all available products. Yet in contexts such as online shopping, there are usually a large number of options available, and it could be impractical to obtain data on all available options. Failing to account for all products might lead to biased estimates. This would not be an issue with the PMR estimator because it only requires data on the search order of one pair of products. Having data on more product pairs will increase the efficiency of the estimator, but it is not necessary for estimation or identification.  

For the empirical application, I apply a match value sequential search model with unobserved product quality to the Expedia hotel search data set without imposing a specific match value distribution assumption. The Expedia data set composes of two samples: a random ranking sample, where the positions of hotels on the list page are randomly assigned; and a non-random ranking sample, where products are ranked by relevance according to Expedia ranking algorithm, so the endogenous position variable can be correlated with unobserved hotel quality. In both samples, price can also be endogenous. I find that the PMR estimate of the position effects using the non-random ranking sample is close to the one using the random ranking sample, whereas the likelihood estimates are not as close. In fact, the likelihood estimates using the non-random ranking sample give an upward bias of at least \$1.17 in dollar values of the position effect due to position endogeneity. This is supporting evidence that the PMR estimator is able to recover the coefficients on endogenous variables better than the simulated likelihood estimator. Moreover, I find that the prevailing assumption of match value distribution as N(0,1) for likelihood estimation is not consistent with the data. Thus, the likelihood estimates using the random ranking sample give an upward bias of at least \$2.99 in dollar values of the position effect due to misspecification of the match value distribution.  

The rest of the paper proceeds as the follows. I first review related literature in Section \ref{sec:litreview}. Then I describe the model in Section \ref{sec:model} and present the estimator in Section \ref{sec:estimator}. Section \ref{sec:montecarlo} shows the monte carlo simulation results. Section \ref{sec:application} describes the empirical setting and discusses the results using the Expedia hotel search data. Lastly, I discuss some extensions of the baseline model and estimator in Section \ref{sec:extension}.

\section{Relevant Literature} \label{sec:litreview}

There has been a large body of literature on estimating sequential search models for differentiated goods in both economics and marketing using a likelihood approach \citep{kim2010,kim2017,koulayev2014,gu2016,chenyao2017,yao2017tv,de2017,morozov2020}. This paper contributes to this strand of literature by proposing a new estimator that can consistently estimate consumers' preference and search costs when there is endogenous unobserved product quality or when there is risk of researchers' misspecification of the match value distribution. \cite{ursu2018} deals with the endogeneity of product's position by estimating a sample where product's positions are randomly assigned. \cite{dongetal2020} uses panel data to estimate product intercepts with a likelihood approach, but there is limited discussion on the endogeneity of price or position. \cite{chungetal2019} proposes a likelihood based estimator that directly simulates the likelihood while forcing the random draws to satisfy the search set conditions. They point out that if we can reliably estimate the endogenous unobserved product quality as a first step, then their estimator can recover the true consumer preference and search cost parameters. \cite{yavorsky2021} identifies the variance of match value with exogenous cost shifters if the match value follows a normal distribution, whereas this paper considers a larger class of distributions for the match value.

Another strand of work uses aggregate level data instead of individual level consideration and purchase data to estimate sequential search models. \cite{gonzalezautomobile} use aggregate share combined with micro survey data and valid price instruments to perform a GMM estimation in the spirit of \cite{blp1995}. They assume particular distributions of the match value and the search cost for an analytical form of the search probabilities for computation. \cite{abaluck2020} use aggregate share and micro data, but they consider sequential search models where consumers search for observed (to researchers) characteristics, such as price, rather than unobserved (to researcher) characteristics, such as match value.\footnote{In their paper, consumers can be allowed to reveal their match value along with observed characteristics when they conduct a search, but they are not allowed to search only to reveal match value.} Their framework can incorporate price endogeneity if valid instruments are available. Their approach can also be used to estimate simultaneous search models.

Other works focus on estimating preference and search costs using different search models. \cite{honka2017} and \cite{de2012testing} estimate price search models and propose methods to empirically identify whether consumers are searching sequentially or simultaneously. \cite{honka2014} and \cite{bergmanchankapor2020} estimate simultaneous search models in the contexts of auto insurance and housing choice. \cite{hong2006} estimate the search cost distribution from a search model with homogeneous goods. \cite{hortacsu2004} considers search models with vertically differentiated goods.  

This paper is closely related to the maximum score literature. The PMR estimator builds on the seminal work on maximum score estimator by \cite{manski1975,manski1985,manski1987} and maximum rank correlation estimator by \cite{han1987}. \cite{horowitz1992} proposes a smoothed maximum score estimator. \cite{abrevaya2000} proposes a fixed effects maximum rank correlation estimator in panel data. The PMR estimator in this paper incorporates a pairwise fixed effect which represents the endogenous unobserved quality of the product pair. Moreover, the works aforementioned are in reduced form settings, whereas this paper proposes an estimator for a structural search model. Variants of the maximum score estimator have been applied to multinomial choice models \citep{fox2007,yan2013,pakesporter,shishumsong,khantamer}. This paper, however, studies a sequential search environment.

\section{Model} \label{sec:model}

We consider a model where consumers engage in sequential search for match values. Section \ref{sec:util} introduces the preference and search process of the consumer. Section \ref{sec:optimsearch} describes the optimal strategy of the consumer. Section \ref{sec:empiric} lays out the empirical specification of the model. Section \ref{sec:outcome} shows how to construct the observed outcome of the model to be used for estimation.

\subsection{Model Primitives} \label{sec:util}

Consider a set of consumers, denoted as $\mathcal{A}$, making discrete choices among a set of products $\mathcal{J} = \{0, 1, ..., J-1 \}$, where product $j=0$ is the outside option. Let the indirect utility $u_{aj}$ of consumer $a$ choosing product $j$ be composed of two parts: prior utility $\delta_{aj}$ and match value $\epsilon_{aj}$. The indirect utility is 
\begin{align} \label{eq:u}
u_{aj} = \delta_{aj} + \epsilon_{aj}  
\end{align}

The prior utility represents the consumer's valuation of the product prior to search. Before conducting any search, the consumer does not know his match value of any product, but he knows the distribution of the match value. To reveal the match value of any product $j$, he has to pay a search cost $c_{aj}$. After revealing the match value of a product, he decides whether to continue or stop searching. If he stops searching, he then makes a purchase decision with free recall. If he chooses not to purchase any inside good, then he gets a known utility of the outside good $u_{a0}$.  

\subsection{The Optimal Strategy} \label{sec:optimsearch}

To characterize the optimal strategy of consumers in the search context described above, I follow \cite{weitzman1979}. First, I define consumer $a$'s reservation utility, denoted as $r_{aj}$, of any inside good $j$ as the solution to the following equation: 
\begin{align} \label{eq:c}
c_{aj} = \int_{r_{aj}}^{\infty} \left( u - r_{aj} \right) dF_{aj}^u \left(u \right)  
\end{align}
where $F_{aj}^u$ is the distribution of utility $u_{aj}$. Intuitively, the reservation utility of a product is the utility level at hand that makes the consumer indifferent between searching that product or not. Assuming that the solution exists, we can then derive the optimal strategy as consisting of the following three steps:
\begin{enumerate}
\item Selection rule: after computing the reservation utility of all products, the consumer will search the product with the highest reservation utility among the unsearched products, if a search is to be made.
\item Stopping rule: stop searching if the highest realized utility so far exceeds the highest reservation utility of all unsearched products.
\item Choice rule: once stopped searching, the consumer chooses the product with the highest realized utility among all searched products.
\end{enumerate}
Note that the consumer is able to compute the reservation utility $r_{aj}$ from equation \eqref{eq:c} because he knows his search cost and the distribution of his utility $u_{aj}$. The distribution of utility is known prior to search because the consumer knows his prior utility $\delta_{aj}$ and the distribution of the match value $\epsilon_{aj}$. 

However, the distribution of the match value can be unknown to the researchers. The distribution can not be identified with the likelihood estimation, so it is often assumed to follow a particular form. Most literature assumes a normal distribution in order to obtain a simple analytical solution of the reservation utility for the ease of computation \citep{kim2010,chenyao2017,honka2017,ursu2018,chungetal2019,dongetal2020}.\footnote{\cite{gonzalezautomobile} and \cite{elberg2019} derive analytical solutions of the reservation utility for type I extreme value and logistic distributions of match value. \cite{yavorsky2021} show that the variance of match value can not be separately identified from the search cost without an exogenous search cost shifter if the match value follows a normal distribution. They have yet shown how to identify match value distribution when it is not normal.} Therefore, the distribution of the match value is very likely to be misspecified in a likelihood estimation. With the misspecified distribution, we will not be able to compute the correct reservation utility from equation \eqref{eq:c}. This will lead to biased estimates of the consumer's preference and search cost parameters in a simulated likelihood estimation.
 
To remediate the risk of misspecification of the match value distribution, we allow for a class of distributions that satisfy the following assumption:
\begin{assumption}\label{as:epsilon}
Let $r_{aj} - \delta_{aj}$ be on a finite support $ [ \underline{w}, \overline{w} ] \subset  \mathbb{R}$. The match value $\epsilon_{aj}$ is distributed iid across consumers as $F_{j}$ for any inside good $j$. Let the density $f_{j}$ be strictly positive on $[ \underline{w}, \overline{w} ]$. 
\end{assumption}
There are several things to note about this assumption. First, we allow the distribution of match value to be product-specific. This means that we can take into account potential correlation between the prior utility and the match value distribution of a product. For example, products with higher prior utility might have higher match value on average, or vice versa. Second, we are not assuming that the match value has bounded support; instead, we only assume that on the bounded support of $r_{aj} - \delta_{aj}$, the match value has strictly positive density. To the best of my knowledge, all distributions of match value assumed in the literature satisfy the assumption above: normal, type I extreme value, and logistic distribution with finite location and scale parameters. 

We can then rewrite equation \eqref{eq:c} as
\begin{align} \label{eq:cG}
c_{aj} =  \int_{r_{aj} - \delta_{aj}}^{\infty} \left[ \epsilon - \left( r_{aj} - \delta_{aj} \right) \right] f_{j} (\epsilon) d \epsilon
\end{align}
The right hand side of the equation above is just the marginal benefit of searching product $j$. To simplify notation, let us define the marginal benefit function of searching product $j$ as $G_j$. Formally,
\begin{align} \label{eq:G}
G_j(w) = \int_{w}^{\infty} (\epsilon-w) f_{j}(\epsilon) d \epsilon
\end{align}
Then equation \eqref{eq:cG} becomes
\begin{align}
c_{aj}  = G_j(r_{aj} - \delta_{aj})
\end{align}	
\begin{lemma} \label{lem:inv}
Let Assumption \ref{as:epsilon} hold, then $G_j$ is strictly decreasing on $[ \underline{w}, \overline{w} ]$. Moreover, $G_{j}^{-1}$ exists and is also strictly decreasing on its support.  
\end{lemma}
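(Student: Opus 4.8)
The plan is to prove strict monotonicity of $G_j$ directly from the integral definition, and then invoke a standard fact about inverses of strictly monotone continuous functions. The first statement is the substance; the claim about $G_j^{-1}$ is then essentially automatic.

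First I would record the derivative as a sanity check and a source of intuition. Writing $G_j(w) = \int_w^\infty \epsilon\, f_j(\epsilon)\, d\epsilon - w\,(1 - F_j(w))$ and differentiating via the fundamental theorem of calculus yields $G_j'(w) = -(1 - F_j(w))$. Since $f_j > 0$ on $[\underline{w}, \overline{w}]$ by Assumption \ref{as:epsilon}, we have $1 - F_j(w) \geq \int_w^{\overline{w}} f_j > 0$ for every $w < \overline{w}$, so $G_j' < 0$ on the interior and $G_j$ is strictly decreasing on $[\underline{w}, \overline{w})$.

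The subtle point, which I expect to be the main obstacle, is the right endpoint $w = \overline{w}$: the assumption only guarantees positive density \emph{on} $[\underline{w}, \overline{w}]$, so if the support of $\epsilon$ happens to terminate at $\overline{w}$ then $1 - F_j(\overline{w}) = 0$ and the derivative argument degenerates there. To get strict decrease on the \emph{closed} interval, I would argue by direct comparison instead. For any $w_1 < w_2$ in $[\underline{w}, \overline{w}]$, splitting the integral over $(w_1,w_2]$ and $(w_2,\infty)$ gives
\begin{align}
G_j(w_1) - G_j(w_2) = \int_{w_1}^{w_2} (\epsilon - w_1)\, f_j(\epsilon)\, d\epsilon + (w_2 - w_1)\bigl(1 - F_j(w_2)\bigr).
\end{align}
The second term is nonnegative, and the first term is \emph{strictly} positive because the integrand $(\epsilon - w_1) f_j(\epsilon)$ is positive on $(w_1, w_2] \subseteq [\underline{w}, \overline{w}]$, where $f_j > 0$. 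Hence $G_j(w_1) > G_j(w_2)$, establishing strict decrease on all of $[\underline{w}, \overline{w}]$, endpoint included, without any appeal to density beyond $\overline{w}$.

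Finally, for the inverse I would note that $G_j$ is continuous (immediate from the derivative computation above, or from dominated convergence once the defining integral is known to be finite) and strictly decreasing, hence a bijection from $[\underline{w}, \overline{w}]$ onto the interval $[G_j(\overline{w}), G_j(\underline{w})]$. A strictly decreasing continuous bijection has a strictly decreasing inverse: if $y_1 < y_2$ lie in the range with $y_i = G_j(x_i)$, then $x_1 > x_2$, since $x_1 \leq x_2$ would force $y_1 = G_j(x_1) \geq G_j(x_2) = y_2$, a contradiction; thus $G_j^{-1}(y_1) > G_j^{-1}(y_2)$. Once the closed-interval comparison is in hand, everything else is routine.
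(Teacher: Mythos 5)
Your proof is correct and takes essentially the same route as the paper: the key step is the identity $G_j(w_1) - G_j(w_2) = \int_{w_1}^{w_2}(\epsilon - w_1) f_j(\epsilon)\,d\epsilon + (w_2 - w_1)\bigl(1 - F_j(w_2)\bigr)$, which is exactly the paper's decomposition through the intermediate quantity $\tilde{G}_j(w_l,w_h)$ (your second term is just the explicit value of $\tilde{G}_j(w_l,w_h) - G_j(w_h)$), followed by the same strict-positivity observation on the first integral and the same contradiction argument for monotonicity of the inverse. The derivative computation you include as a sanity check is extra, but the operative argument matches the paper's.
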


\begin{proof}
See appendix.
\end{proof}

Even if we do not know the exact functional form of $G_j^{-1}$, we know that it is strictly decreasing from Lemma \ref{lem:inv}. Thus, we can write the reservation utility as a function of the search cost and the prior utility:
\begin{align} \label{eq:rG}
 r_{aj} = G_j^{-1} ( c_{aj}) + \delta_{aj}
\end{align}	
We next discuss in more detail how the prior utility and the search cost are specified.

\subsection{Empirical Specification} \label{sec:empiric}

\subsubsection{Prior Utility} \label{sec:empiricutil}

I model the consumer's prior utility $\delta_{aj}$ as
\begin{align} \label{eq:delta}
\delta_{aj} =  x_{aj}' \beta + \xi_j + \nu_a + \eta_{aj}  
\end{align}
$x_{aj} \in \mathbb{R}^{q_x}$ are observable characteristics of product $j$ to consumer $a$. I treat a search impression as equivalent to a consumer, so the variation of observable characteristics (such as price) across consumers can come from the variation across search impressions. The baseline model only considers characteristics that vary across search impressions. The model with characteristics that are fixed across search impressions will be discussed in Section \ref{sec:invar}. The consumer's preference $\beta \in \mathbb{R}^{q_x}$ is a parameter of interest. We will allow $\beta$ to be consumer-specific in Section \ref{sec:hetero}.  

In addition to the observed characteristics, the prior utility also includes the unobserved (to researchers) product quality $\xi_j$. The unobserved quality can be correlated with observed characteristics. For example, firms will set their optimal prices taking into account the quality of the products. This is the classic endogenous price problem in the demand estimation literature. Here we do not impose any distributional assumption on the unobserved quality: unobserved quality can be arbitrarily correlated with observed characteristics. 

The remaining components of the prior utility are unobserved (to researchers) individual fixed effect $\nu_a$ and taste shock $\eta_{aj}$ prior to search. Pre-search taste shock could be a random recommendation from a friend, which is unobserved to the researchers but changes the consumer's valuation of the product before searching.\footnote{\cite{dongetal2020} also models pre-search taste shock in a similar fashion, but assumes it to follow a normal distribution.} Let the taste shock $\eta_{aj}$ be distributed iid across consumers and products with positive density everywhere on $\mathbb{R}$. Because the individual fixed effect $\nu_a$ does not affect the search order of any product pair within consumer, we don't impose any distributional assumption on $\nu_a$. 

With the prior utility specified as in equation \eqref{eq:delta}, consumer $a$'s indirect utility of choosing product $j$ is therefore
\begin{align}
u_{aj} = x_{aj}' \beta + \xi_j + \nu_a + \eta_{aj} + \epsilon_{aj}  
\end{align}

\subsubsection{Search Cost} \label{sec:empiricsrch}
The search cost incurred by the consumer involves the time and effort he spends to reveal the match value of the product of interest. Therefore, the search cost would depend on some search-related characteristics of the product as well as consumer characteristics. Specifically, I model the search cost as
\begin{align} \label{eq:cparam}
c_{aj} = \exp \left( z_{aj}' \gamma \right) 
\end{align}
where $z_{aj} \in \mathbb{R}^{q_z}$ are observable search cost variables. In the context of online search, $z_{aj}$ can include the position of the product on the list page. \cite{ursu2018} shows that products positioned further down on the list page get less clicks even when the positions are randomly assigned, suggesting "scrolling costs" of searching products further down on the page. Other examples of $z_{aj}$ are loading time and information complexity of the web page because consumers incur higher cognitive costs to gather and analyze information when the information is less ordered \citep{gu2016}. In the context of offline search, examples of $z_{aj}$ include consumers' distances to stores \citep{gonzalezautomobile,yavorsky2021}; how time-constrained consumers are when they search \citep{mcdevitt2014,pinnaseiler,chenyao2017}; other consumer characteristics that might affect search cost such as age, education, experience on searching: \cite{hortacsu2004} suggests that new investors in mutual funds who are younger, less educated and less experienced are likely to have higher information-gathering costs. The search cost parameter $\gamma \in \mathbb{R}^{q_z}$ is homogeneous across consumers in the baseline model. We will relax it to be consumer-specific in Section \ref{sec:hetero}. 

In the main specification of the model, I treat $z_{aj}$ as excluded from utility $u_{aj}$ for purpose of exposition. However, there can be some $z_{aj}$ that are included in utility $u_{aj}$. For example, distance to schools might affect both search costs and utility of students in a school choice context. Section \ref{sec:zutil} describes how we can adapt the model and the estimator to allow $z_{aj}$ to enter utility.

\subsection{The Observed Outcome} \label{sec:outcome}

This section discusses how I construct the outcome variable observed in the data for estimation. First, note that the reservation utility $r_{aj}$ can be derived from combining equations \eqref{eq:rG}, \eqref{eq:delta} and \eqref{eq:cparam}:
\begin{align} \label{eq:r}
 r_{aj} = G_j^{-1} (\exp \left( z_{aj}' \gamma \right) ) + x_{aj}' \beta + \xi_j +\nu_a + \eta_{aj}
\end{align}	
Next, define an outcome variable $S_{aij}$ as the following:
\begin{align} \label{eq:Sdef}
S_{aij} =  \mathds{1}\{ r_{ai}  > r_{aj}  \} \quad \forall a \in \mathcal{A}, \, i \neq j \in \mathcal{J}
\end{align}
This means that $S_{aij}=1$ if consumer $a$'s reservation utility of product $i$ is higher than that of product $j$, and $S_{aij}=0$ otherwise. Recall that the selection rule of the optimal strategy described in Section \ref{sec:optimsearch} implies that the consumer would search products in the decreasing order of their reservation utilities before he stops searching. Therefore, if consumer $a$ searches both $i$ and $j$, then the researcher observes $S_{aij}=1$ if product $i$ is searched before $j$ and $S_{aij}=0$ otherwise; if consumer $a$ searches $i$ but not $j$, then the researcher observes $S_{aij}=1$; if consumer $a$ searches $j$ but not $i$, then the researcher observes $S_{aij}=0$; if consumer $a$ does not search any of products $i$ and $j$, then the researcher can not observe $S_{aij}$. Let $\mathcal{A}_{ij}$ denote the set of consumers who search \textit{at least} one product among the product pair $i,j$. Then the researcher is able to observe the outcome variable $S_{aij}$ for all the consumers $ a \in \mathcal{A}_{ij}$. 

Combining equations \eqref{eq:r} and \eqref{eq:Sdef}, the outcome variable can be written more explicitly as
\begin{align} \label{eq:S}
S_{aij} =  \mathds{1}\{G_i^{-1} (\exp ( z_{ai}' \gamma ) ) - G_j^{-1} (\exp ( z_{aj}' \gamma ) ) + x_{ai}' \beta - x_{aj}' \beta + \xi_{i} - \xi_j + \eta_{ai} - \eta_{aj}  >0 \}
\end{align}
The consumer knows all elements in the equation above, but the researcher only observes $\left( S_{aij}, z_{ai}, z_{aj}, x_{ai}, x_{aj} \right)$ for all $a \in \mathcal{A}_{ij}$ and wants to estimate search cost parameter $\gamma$ and preference parameter $\beta$. For simplicity, I refer to the outcome variable $S_{aij}$ as the search order of product pair $i,j$ throughout the paper. To consistently estimate $\gamma$ and $\beta$, the researcher only needs data on the search order of any product pair, instead of data on consumers' full consideration set or final purchase. Note that if the researcher has data on consumers'  consideration set, then she can use the data by picking a searched product and an unsearched product to be the product pair of interest. The outcome variable (search order) of this product pair is observed because the searched product has higher reservation utility than the unsearched product.

\section{Estimator} \label{sec:estimator}

Given the estimating equation \eqref{eq:S}, I now propose a novel estimator for consumers' preference and search cost parameters. Section \ref{sec:nonsmooth} introduces the ideal estimator and shows strong consistency. Section \ref{sec:smooth} describes an implementable smoothed version of the estimator and proves strong consistency. 
 
\subsection{The Ideal Estimator} \label{sec:nonsmooth}

The estimator is developed in the spirit of the maximum score estimator \`a la \cite{manski1975} and the maximum rank correlation estimator \`a la \cite{han1987}. For any pair of products $i$ and $j$, the ideal estimator maximizes the following objective:
\begin{align*}
Q_{ \mathcal{A}_{ij}} (b,m) = {\abs{\mathcal{A}_{ij}} \choose 2}^{-1} \sum_{a \neq \tilde{a} \in \mathcal{A}_{ij}} \quad \quad \quad \quad \quad \quad \quad \quad \quad \quad \quad & \\
 \mathds{1}\{z_{ai}= z_{\tilde{a}i},z_{aj}= z_{\tilde{a}j} \}  \bigg[ \, \mathds{1}\{x_{aij}'b > x_{\tilde{a}ij}'b\} \mathds{1}\{S_{aij} > S_{\tilde{a}ij}\} 
																					    &+  \mathds{1}\{x_{aij}'b < x_{\tilde{a}ij}'b\} \mathds{1}\{S_{aij} < S_{\tilde{a}ij}\} \, \bigg] \\
+ \, \mathds{1}\{z_{ai}= z_{\tilde{a}i},x_{aij}= x_{\tilde{a}ij} \} \bigg[ \mathds{1}\{z_{aj}'m > z_{\tilde{a}j}'m  \} \mathds{1}\{S_{aij} > S_{\tilde{a}ij}\} 
																						&+ \mathds{1}\{ z_{aj}'m < z_{\tilde{a}j}'m \} \mathds{1}\{S_{aij} < S_{\tilde{a}ij}\} \bigg] \\
+ \, \mathds{1}\{z_{aj}= z_{\tilde{a}j},x_{aij}= x_{\tilde{a}ij} \} \bigg[ \, \mathds{1}\{ z_{ai}'m < z_{\tilde{a}i}' m \} \mathds{1}\{S_{aij} > S_{\tilde{a}ij}\} 
																						&+ \mathds{1}\{ z_{ai}' m > z_{\tilde{a}i}' m\} \mathds{1}\{S_{aij} < S_{\tilde{a}ij}\} \, \bigg] 
\end{align*}
where $x_{aij}$ denotes $x_{ai} - x_{aj}$ for any consumer $a \in \mathcal{A}_{ij}$. Let $\abs{\mathcal{A}_{ij}}$ be the cardinality of the set $\mathcal{A}_{ij}$. So ${\abs{\mathcal{A}_{ij}} \choose 2}$ is the number of ways to choose distinct consumer pairs.

The objective represents the frequency of correct predictions of the ranking of outcome variable $S$ within all consumer pairs, given the observed characteristics $x$ and search cost variables $z$. More specifically, to understand the first line of the objective: consider two consumers with the same search cost variables $z$ for both products, then the consumer with higher values of $x_{aij}' \beta$ should have higher values of $S_{aij}$ than the other consumer in expectation, according to equation \eqref{eq:S}. The reason is: first, the within product pair difference in unobserved quality $\xi_i - \xi_j$ is the same for both consumers; second, the random taste shock $\eta$ is distributed identically and independently across consumers and products. Intuitively, the consumer with larger $x_{aij}' \beta$ has larger pairwise difference in the prior utility than the other consumer in expectation. So if they have the same search costs for both products, then he has larger pairwise difference in the reservation utility, which implies that he is more likely to search product $i$ before $j$ than the other consumer. 

A similar understanding goes with the second line of the objective. If two consumers have the same search cost for product $i$ and same pairwise difference in observed characteristics, then the consumer with higher values of $z_{aj}' \gamma$ has higher values of $S_{aij}$ in expectation. This is because, according to equation \eqref{eq:S}, $S_{aij}$ is an increasing function of $z_{aj}' \gamma$. As we have shown in Section \ref{sec:optimsearch}, even if we don't know the exact functional form of $G_{j}^{-1}$, we have proved that it is strictly decreasing. Strict monotonicity is sufficient for identification because we only rely on the ranking within any consumer pair. The third line is the mirror argument of the second line.

Therefore, the true parameters $\beta$ and $\gamma$ make correct predictions of the ranking of outcome variable within consumer pairs most frequently, thus maximizing the objective function. If we observe the consideration set of consumers or the search order of multiple pairs of products, then we can use these additional data by summing the objective function over all pairs of products for a new objective. This objective will also be maximized at the true parameters. 

The maximizer of the objective above is the ideal estimator because if covariates $x$ and $z$ are continuously distributed, then the exact matching of covariates between two consumers almost never happens, so $\mathds{1}\{z_{ai}= z_{\tilde{a}i},x_{aij}= x_{\tilde{a}ij} \}$ is almost always zero. Therefore, we will propose an implementable smoothed version of the ideal estimator in Section \ref{sec:smooth} to deal with this concern.

Now, we show that the ideal estimator is strongly consistent and describe formally some additional sufficient assumptions. To simplify notation, we use  superscript $k$ to denote the $k$th component of a variable, and tilde on variable to denote all other components except for the $k$th one. 
\begin{assumption} \label{as:support} 
For a pair of product $(i,j)$ and any $a \neq \tilde{a} \in \mathcal{A}_{ij}$, \\
 (a) Supports of $x_{aij}-x_{\tilde{a}ij}$ and $z_{ai} - z_{\tilde{a}i}$ are not contained in any proper linear subspace of $\mathbb{R}^{q_x}$ and $\mathbb{R}^{q_z}$, respectively. \\
 (b) There exists at least one component $k$ such that $\gamma^k \neq 0$, $\beta^k \neq 0$, and
distributions of both $x_{aij}^k - x_{\tilde{a}ij}^k \, \vert \, w_x = ( \tilde{x}_{aij}, \tilde{x}_{\tilde{a}ij}, z_{ai}, z_{\tilde{a}i}, z_{aj}, z_{\tilde{a}j} )$ and
 $z_{ai}^k - z_{\tilde{a}i}^k \, \vert \, w_z = ( \tilde{z}_{ai}, \tilde{z}_{\tilde{a}i}, z_{aj}, z_{\tilde{a}j}, x_{aij}, x_{\tilde{a}ij} )$ have everywhere positive density on $\mathbb{R}$ for almost every values of $w_x$ and $w_z$. \\
 (c) There exists a known constant $\rho >0$ such that $\abs{\beta^k} / \norm{\beta} \geq \rho$ and $\abs{\gamma^k} / \norm{\gamma} \geq \rho$.\\ Let $k = 1$ without loss of generality.
\end{assumption}

Part (a) of Assumption \ref{as:support} resembles the full-rank condition in the maximum score literature. This means that we need variation of observed characteristics and search cost variables across consumers or search impressions. For the case where some observed characteristics are the same for all consumers, we will introduce an adapted estimator in Section \ref{sec:invar}. Most of the search cost variables would vary across consumers or search impressions, as discussed in Section \ref{sec:empiricsrch}. Part (b) is an assumption frequently made in the maximum score literature. One example of a component in $x$ that satisfies this assumption is price. A restriction implied by part (b) is that we must have at least one search cost variable with continuous support. Examples include log of distance to stores, log of time constraints, orderedness entropy as a measure of information complexity as in \cite{gu2016}, et cetera. 

The assumptions on the support and distribution of $z_{ai} - z_{\tilde{a}i}$ are made with respect to product $i$ without loss of generality. We could have made the assumptions with product $j$. The point here is that we only need these assumptions on search cost variables for one product within the pair. 

It is also important to realize that the ideal estimator is scale invariant. Therefore, we will only discuss identification with respect to the normalized parameters: $\beta^{*} = \beta / \norm{\beta}$ and $\gamma^{*} = \gamma / \norm{\gamma}$. The parameter space of interest is $\Theta_{\rho} = \{ (b,m) : \norm{b} = \norm{m} = 1, \abs{b^1} \geq \rho, \abs{m^1} \geq \rho \}$.\footnote{We can also normalize the parameters by setting one component at a fixed value, as in \cite{horowitz1992} and \\ \cite{abrevaya2000}. We use this normalization scheme in the monte carlo exercise in Section \ref{sec:montecarlo} and the empirical application in Section \ref{sec:application}.} 

\begin{theorem} \label{thm:nonsmooth}
Consider the model described in Section \ref{sec:model} and let Assumptions \ref{as:epsilon}-\ref{as:support} hold. Let $(b_{\mathcal{A}_{ij}}, m_{\mathcal{A}_{ij}})$ be the solution to $\max_{(b,m) \in \Theta_{\rho}} Q_{ \mathcal{A}_{ij}} (b,m)$, then $\lim_{\abs{\mathcal{A}_{ij}} \rightarrow \infty} b_{\mathcal{A}_{ij}} = \beta^{*}$ and \\ $\lim_{\abs{\mathcal{A}_{ij}} \rightarrow \infty} m_{\mathcal{A}_{ij}} = \gamma^{*}$ almost surely.
\end{theorem}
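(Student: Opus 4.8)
The plan is to treat $(b_{\mathcal{A}_{ij}}, m_{\mathcal{A}_{ij}})$ as an extremum estimator and establish strong consistency through the three classical ingredients: (i) a population objective $Q(b,m)$, equal to the mean of the kernel, that is uniquely maximized at $(\beta^{*},\gamma^{*})$ over the compact set $\Theta_{\rho}$; (ii) almost-sure uniform convergence of $Q_{\mathcal{A}_{ij}}$ to $Q$ on $\Theta_{\rho}$; and (iii) continuity of $Q$. Given these, the standard argmax argument (following \cite{manski1985} and \cite{han1987}) forces every limit point of the maximizer into the set of maximizers of $Q$, which is the singleton $\{(\beta^{*},\gamma^{*})\}$, delivering the claim. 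Since $Q_{\mathcal{A}_{ij}}$ is a second-order $U$-statistic with a bounded kernel, pointwise a.s.\ convergence to $Q$ follows from the strong law for $U$-statistics; the work lies in upgrading to uniformity and in the identification step.

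Identification is the conceptual core, and I would establish it by computing the expected kernel conditional on the matching events and exploiting the monotone-index structure of \eqref{eq:S}. Writing $S_{aij} = \mathds{1}\{I_a + (\eta_{ai}-\eta_{aj}) > 0\}$, where $I_a$ collects the index terms and the pre-search shocks are i.i.d.\ across consumers with a continuous, everywhere-positive density (so the difference has a strictly increasing c.d.f.\ $H$), a short calculation gives, for two independent consumers, $\Pr(S_{aij} > S_{\tilde a ij}\mid\cdot) - \Pr(S_{aij} < S_{\tilde a ij}\mid\cdot) = H(-I_{\tilde a}) - H(-I_a)$, which shares the sign of $I_a - I_{\tilde a}$. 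The objective separates additively into a $b$-part (line one) and an $m$-part (lines two and three), so the two parameters are maximized independently. Conditioning on $z_{ai}=z_{\tilde a i},\,z_{aj}=z_{\tilde a j}$ makes the common product-pair fixed effect $\xi_i-\xi_j$ and the search-cost terms cancel in $I_a - I_{\tilde a}$, giving $\operatorname{sign}(I_a - I_{\tilde a}) = \operatorname{sign}((x_{aij}-x_{\tilde a ij})'\beta)$; conditioning as in lines two and three, together with the strict monotonicity of $G_i^{-1}$ and $G_j^{-1}$ from Lemma \ref{lem:inv}, yields $\operatorname{sign}(I_a-I_{\tilde a}) = \operatorname{sign}((z_{aj}-z_{\tilde a j})'\gamma)$ and $\operatorname{sign}(I_a-I_{\tilde a}) = \operatorname{sign}(-(z_{ai}-z_{\tilde a i})'\gamma)$, matching exactly the inequality directions written into the objective. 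Hence each term is maximized pointwise in the covariates by choosing $b$ (resp.\ $m$) to reproduce the sign of the true index difference, and the support and positive-density conditions of Assumption \ref{as:support} guarantee that any $b\neq\beta^{*}$ or $m\neq\gamma^{*}$ disagrees on a set of positive probability, producing a strict loss. I expect the bookkeeping that turns disagreement into positive measure — using the continuous identifying component with everywhere-positive conditional density together with the no-proper-subspace condition — to be the most delicate piece of this step.

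For uniform convergence I would avoid appealing to continuity of the sample objective, since it is discontinuous in $(b,m)$ through the half-space indicators $\mathds{1}\{x_{aij}'b > x_{\tilde a ij}'b\}$ and $\mathds{1}\{z_{aj}'m > z_{\tilde a j}'m\}$, and instead note that these indicators form a VC (Euclidean) class indexed by the compact $\Theta_{\rho}$, so that a uniform strong law for $U$-processes applies to the bounded kernel. Continuity of the limit $Q$ is then recovered because the indicator discontinuities are integrated against the continuous conditional density of the identifying components (Assumption \ref{as:support}(b)), which smooths them out.

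The main obstacle, however, is more basic and is precisely what motivates Section \ref{sec:smooth}: when the matched covariates contain continuously distributed components, the exact-matching events $\{z_{ai}=z_{\tilde a i},\,z_{aj}=z_{\tilde a j}\}$ and $\{z_{ai}=z_{\tilde a i},\,x_{aij}=x_{\tilde a ij}\}$ occur with probability zero, so the sample objective is identically zero and its maximizer is ill-defined. The theorem must therefore be read with the matched covariates treated as discrete (or the argument run conditionally on the matching events), with the continuous identifying component supplying the within-cell variation that Assumption \ref{as:support}(b) demands. I would either state this restriction explicitly or frame the ideal estimator purely as the population benchmark whose implementable counterpart is shown consistent in Section \ref{sec:smooth}.
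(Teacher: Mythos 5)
Your proposal is correct and follows the same overall architecture as the paper's proof: verify that the population objective $Q=Q_1(b)+Q_2(m)+Q_3(m)$ is uniquely maximized at $(\beta^{*},\gamma^{*})$ on the compact $\Theta_{\rho}$, that $Q$ is continuous, and that $Q_{\mathcal{A}_{ij}}$ converges to $Q$ uniformly almost surely, then invoke the standard argmax consistency argument. The one place you genuinely diverge is the key ranking lemma. The paper proves that the conditional ranking of $S_{aij}$ tracks the sign of the index difference by a swapping argument: it exchanges the i.i.d.\ shocks $\eta_{aij}$ and $\eta_{\tilde a ij}$ across the two consumers and then exhibits explicit events $\Omega_a,\Omega_{\tilde a}$ of positive probability on which the inequality is strict. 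You instead compute $\Pr(S_{aij}>S_{\tilde a ij}\mid\cdot)-\Pr(S_{aij}<S_{\tilde a ij}\mid\cdot)=H(-I_{\tilde a})-H(-I_a)$ directly from independence, where $H$ is the strictly increasing c.d.f.\ of $\eta_{ai}-\eta_{aj}$; this is shorter and makes the monotone-index structure transparent, at the cost of being slightly less elementary. Your treatment of uniform convergence via a VC-class uniform strong law for $U$-processes is what the paper's citation to \cite{han1987} amounts to, and your observation that the product-pair fixed effect $\xi_i-\xi_j$ cancels only after conditioning on the matching events, and that $G^{-1}$ need only be strictly decreasing (Lemma \ref{lem:inv}), matches the paper exactly. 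Your closing caveat --- that with continuously distributed matching covariates the exact-matching indicators are almost surely zero, so the ideal objective degenerates and the theorem should be read either with discrete matched covariates or as the population benchmark for the smoothed estimator of Theorem \ref{thm:smooth} --- is one the paper itself concedes in Section \ref{sec:nonsmooth}; making it explicit, as you do, is a fair strengthening rather than a gap. The only piece you defer rather than execute is the positive-measure bookkeeping showing that any $b\neq\beta^{*}$ misclassifies the sign on a set of positive probability; the paper carries this out by writing the disagreement region $X(b,\beta^{*})$ as a sandwich between two hyperplanes in the first component and applying Assumption \ref{as:support}(a)--(b), which is exactly the construction you anticipate.
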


\begin{proof}
Proof of Theorem \ref{thm:nonsmooth} is in Appendix \ref{app:thm1pf}. 
\end{proof}

\subsection{The Smoothed Estimator} \label{sec:smooth}

As mentioned in Section \ref{sec:nonsmooth}, we now introduce an implementable smoothed version of the ideal estimator to deal with the lack of exact matching between continuously distributed covariates. The idea is to replace the indicator of exact matching with a smoothing function of the difference of the covariates to be matched. The smoothing function assigns more weights when the difference is closer to zero, and less weights when the difference is further away from zero. Formally, the smoothed estimator maximizes the following objective:
\begin{align*}
SQ_{ \mathcal{A}_{ij}} (b,m;  \sigma_{\mathcal{A}_{ij}}) = {\abs{\mathcal{A}_{ij}} \choose 2}^{-1} \sum_{a \neq \tilde{a} \in \mathcal{A}_{ij}} \quad \quad \quad \quad \quad \quad \quad \quad \quad  & \\
 K_{\sigma_{\mathcal{A}_{ij}}} \bigg( \tiny \norm{ \begin{bmatrix} z_{ai} \\ z_{aj} \end{bmatrix} - 
										 					   \begin{bmatrix} z_{\tilde{a}i} \\ z_{\tilde{a}j} \end{bmatrix} } \bigg) 
\normalsize
 \bigg[ \, \mathds{1}\{x_{aij}'b > x_{\tilde{a}ij}'b\} \mathds{1}\{S_{aij} > S_{\tilde{a}ij}\} 
	   &+  \mathds{1}\{x_{aij}'b < x_{\tilde{a}ij}'b\} \mathds{1}\{S_{aij} < S_{\tilde{a}ij}\} \, \bigg] \\
+ \,   K_{\sigma_{\mathcal{A}_{ij}}} \bigg( \tiny  \norm{ \begin{bmatrix} z_{ai} \\ x_{aij} \end{bmatrix} - 
										 \begin{bmatrix} z_{\tilde{a}i} \\ x_{\tilde{a}ij} \end{bmatrix} } \bigg) 
 \normalsize
\bigg[ \mathds{1}\{z_{aj}'m > z_{\tilde{a}j}'m  \} \mathds{1}\{S_{aij} > S_{\tilde{a}ij}\} 
   &+ \mathds{1}\{ z_{aj}'m < z_{\tilde{a}j}'m \} \mathds{1}\{S_{aij} < S_{\tilde{a}ij}\} \bigg] \\
+ \,  K_{\sigma_{\mathcal{A}_{ij}}} \bigg( \tiny \norm{ \begin{bmatrix} z_{aj} \\ x_{aij} \end{bmatrix} - 
										 \begin{bmatrix} z_{\tilde{a}j} \\ x_{\tilde{a}ij} \end{bmatrix} }  \bigg) 
\normalsize
 \bigg[ \, \mathds{1}\{ z_{ai}'m < z_{\tilde{a}i}' m \} \mathds{1}\{S_{aij} > S_{\tilde{a}ij}\} 
        &+ \mathds{1}\{ z_{ai}' m > z_{\tilde{a}i}' m\} \mathds{1}\{S_{aij} < S_{\tilde{a}ij}\} \, \bigg] 
\end{align*}
where $\norm{.}$ represents the Euclidean norm and $K_{\sigma_{\mathcal{A}_{ij}}} (v) = K (\nicefrac{v}{\sigma_{\mathcal{A}_{ij}}})$ for some function $K$ and scalar $\sigma_{\mathcal{A}_{ij}}$ that satisfy the following assumptions:
\begin{assumption} \label{as:K} 
(a) $K: \mathbb{R} \rightarrow \mathbb{R}$ is a continuous function such that $\abs{K(v)}<M$ for some finite $M$ and for all $v$ on $\mathbb{R}$. 
(b) $\lim_{v \rightarrow -\infty} K(v) = 0$ and  $\lim_{v \rightarrow \infty} K(v) = 0$.
\end{assumption}	
	
\begin{assumption} \label{as:bandwidth} 
$\{ \sigma_{\mathcal{A}_{ij}}\}_{\abs{{\mathcal{A}_{ij}}}=1}^{\,\infty}$ is a sequence of strictly positive scalars with $\lim_{\abs{\mathcal{A}_{ij}} \rightarrow \infty} \sigma_{\mathcal{A}_{ij}} = 0$.
\end{assumption}	

The smoothed estimator is closely related to the kernel density estimation methods and the smoothed maximium score estimator in \cite{horowitz1992}.\footnote{The optimal selection of bandwidth $\sigma_{\mathcal{A}_{ij}}$ can be derived analogously from \cite{horowitz1992}.} The following theorem proves the strong consistency of the smoothed estimator.
	
\begin{theorem} \label{thm:smooth}
Consider the model described in Section \ref{sec:model} and let Assumptions \ref{as:epsilon}-\ref{as:bandwidth} hold. Let $(b_{\mathcal{A}_{ij}}, m_{\mathcal{A}_{ij}})$ be the solution to $\max_{(b,m) \in \Theta_{\rho}} SQ_{ \mathcal{A}_{ij}} (b,m;  \sigma_{\mathcal{A}_{ij}})$, then $\lim_{\abs{\mathcal{A}_{ij}} \rightarrow \infty} b_{\mathcal{A}_{ij}} = \beta^{*}$ and $\lim_{\abs{\mathcal{A}_{ij}} \rightarrow \infty} m_{\mathcal{A}_{ij}} = \gamma^{*}$ almost surely.
\end{theorem}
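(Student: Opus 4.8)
The plan is to verify the two standard ingredients of a consistency argument for an extremum estimator over a compact parameter space: a uniform almost-sure limit for the objective and a unique maximizer of that limit at the truth. The structural feature that makes this tractable is that $SQ_{\mathcal{A}_{ij}}(b,m;\sigma_{\mathcal{A}_{ij}})$ is additively separable in the two parameter blocks — the first line depends only on $b$, while the second and third lines depend only on $m$ — and $\Theta_{\rho}$ is itself a product set in $(b,m)$. Hence the maximization decouples, $\max_{(b,m)\in\Theta_{\rho}} SQ_{\mathcal{A}_{ij}} = \max_{b}(\text{line }1) + \max_{m}(\text{lines }2\text{--}3)$, and it suffices to prove consistency of each block separately. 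Exploiting the scale invariance already noted for the ideal estimator, I would analyze the bandwidth-normalized objective, dividing the first line by $\sigma_{\mathcal{A}_{ij}}^{2q_z}$ and the second and third lines by $\sigma_{\mathcal{A}_{ij}}^{q_z+q_x}$ (the dimensions of the vectors matched in each kernel). Each such normalization is a positive constant that leaves the corresponding argmax unchanged but produces a nondegenerate limit.

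For identification I would show that the normalized population objective of each block equals, up to a positive constant determined by $K$, the population version of the ideal objective of Theorem \ref{thm:nonsmooth}. As $\sigma_{\mathcal{A}_{ij}}\to 0$ (Assumption \ref{as:bandwidth}), the weight $K_{\sigma_{\mathcal{A}_{ij}}}$ concentrates on consumer pairs whose matched covariates coincide, so the smoothing weights reproduce in the limit the exact-matching indicators of $Q_{\mathcal{A}_{ij}}$. The limiting integrand is therefore the same conditional pairwise comparison analyzed in Theorem \ref{thm:nonsmooth}, which — using the strict monotonicity of $G_j^{-1}$ from Lemma \ref{lem:inv}, the iid structure of $\eta$ and $\epsilon$, and the support and rank conditions of Assumption \ref{as:support} — is uniquely maximized at $\beta^{*}$ for the first block and $\gamma^{*}$ for the second and third. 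Identification thus carries over directly from the ideal case, with no new argument needed beyond the kernel-concentration step.

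For the uniform convergence I would split the normalized objective into a deterministic smoothing bias and a centered stochastic term. The bias — the gap between the normalized expectation and its $\sigma_{\mathcal{A}_{ij}}\to 0$ limit — vanishes uniformly in $(b,m)$ by the change of variables $z' = z + \sigma_{\mathcal{A}_{ij}} u$ followed by dominated convergence, using the boundedness and vanishing tails of $K$ (Assumption \ref{as:K}, together with integrability of $K$) and the everywhere-positive, continuous densities of Assumption \ref{as:support}; since the $(b,m)$-dependent indicators are uniformly bounded, the dominating function does not depend on the parameters. The centered term is a second-order U-process, which I would control via its Hoeffding decomposition into a projection part and a degenerate remainder, combined with an almost-sure uniform law of large numbers: the classes $\{\mathds{1}\{x_{aij}'b > x_{\tilde{a}ij}'b\}\}$ and $\{\mathds{1}\{z_{aj}'m > z_{\tilde{a}j}'m\}\}$ are VC (half-space) classes, and multiplying by the bounded kernel and by $\mathds{1}\{S_{aij} > S_{\tilde{a}ij}\}$ keeps the class uniformly bounded and manageable, so a maximal inequality plus Borel--Cantelli yields almost-sure uniform convergence of the centered process to zero. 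This is the U-statistic machinery underlying \cite{han1987} and \cite{abrevaya2000}, adapted to the smoothed weights of \cite{horowitz1992}.

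The main obstacle is the interplay between the shrinking bandwidth and the stochastic term. After normalization the kernel inflates the variance of the U-process by a factor of order $\sigma_{\mathcal{A}_{ij}}^{-d}$ (with $d = 2q_z$ or $d = q_z+q_x$ for the respective blocks), so almost-sure uniform control of the centered process requires that the bandwidth not shrink too fast — concretely a rate condition of the form $\abs{\mathcal{A}_{ij}}\,\sigma_{\mathcal{A}_{ij}}^{d} \to \infty$ strengthening Assumption \ref{as:bandwidth}, which forces the effective number of matched consumer pairs to diverge. Under such a rate the variance of the leading Hoeffding projection vanishes and the degenerate remainder is asymptotically negligible, so the centered process converges uniformly to zero almost surely. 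Combining this with the inherited identification and the compactness of $\Theta_{\rho}$, the argmax theorem yields $\lim_{\abs{\mathcal{A}_{ij}}\to\infty} b_{\mathcal{A}_{ij}} = \beta^{*}$ and $\lim_{\abs{\mathcal{A}_{ij}}\to\infty} m_{\mathcal{A}_{ij}} = \gamma^{*}$ almost surely.
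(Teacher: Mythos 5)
Your route is genuinely different from the paper's, and the difference matters for which hypotheses are needed. The paper never normalizes the smoothed objective by a power of the bandwidth. Instead, its Lemma \ref{lem:conv} bounds $\sup_{b}\abs{SQ_{1,\mathcal{A}_{ij}}-Q_{1,\mathcal{A}_{ij}}}$ --- the gap between the smoothed and the exact-matching \emph{sample} objectives --- by the sample average of $\abs{K(\norm{z_a-z_{\tilde a}}/\sigma_{\mathcal{A}_{ij}})-\mathds{1}\{z_a=z_{\tilde a}\}}$, which is free of $(b,m)$, and kills it by splitting consumer pairs according to whether $\norm{z_a-z_{\tilde a}}\geq\alpha$ (where $K$ vanishes by Assumption \ref{as:K} and $\sigma_{\mathcal{A}_{ij}}\to0$) or $\norm{z_a-z_{\tilde a}}<\alpha$ (a set of pairs whose frequency vanishes as $\alpha\to0$ by the continuous distribution of $z^{1}_{ai}-z^{1}_{\tilde a i}$ in Assumption \ref{as:support}(b)). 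Uniform convergence of $Q_{\mathcal{A}_{ij}}$ to $Q$, identification, compactness and continuity are then all inherited from Theorem \ref{thm:nonsmooth} --- that part of your plan matches the paper exactly. The paper's comparison is entirely elementary: no U-process maximal inequalities, no Hoeffding decomposition, and, crucially, no lower bound on how fast $\sigma_{\mathcal{A}_{ij}}$ may shrink.

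The substantive divergence is your bandwidth normalization. Dividing the blocks by $\sigma_{\mathcal{A}_{ij}}^{2q_z}$ and $\sigma_{\mathcal{A}_{ij}}^{q_z+q_x}$ yields a nondegenerate density-weighted limit and is the natural setup for rates and distribution theory in the style of \cite{horowitz1992}, but it inflates the variance of the centered U-process by $\sigma_{\mathcal{A}_{ij}}^{-d}$, and you correctly conclude that almost-sure uniform convergence then requires $\abs{\mathcal{A}_{ij}}\,\sigma_{\mathcal{A}_{ij}}^{d}\to\infty$. That rate condition is not implied by Assumption \ref{as:bandwidth}, so as written your argument proves a variant of the theorem under a strengthened bandwidth assumption rather than the theorem as stated. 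To prove the stated result under the stated assumptions you should drop the normalization and argue as the paper does, comparing $SQ_{\mathcal{A}_{ij}}$ to $Q_{\mathcal{A}_{ij}}$ directly. The price the paper pays for this shortcut is that its population target $Q$ retains the exact-matching indicators, which is degenerate when the matched covariates are continuously distributed; your normalization is precisely what one would introduce to repair that, but doing so imports the bandwidth rate condition that the theorem's hypotheses do not supply.
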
	

\begin{proof}
Proof of Theorem \ref{thm:smooth} is in Appendix \ref{app:thm2pf}.
\end{proof}

Note that the objective $SQ_{\mathcal{A}_{ij}}$ is still discontinuous with respect to the parameters of interest. If we want to deal with the complexity of discontinuity and increase the rate of convergence, we can apply a slightly adjusted smoothing function on the indicators involving the parameters of interest. For example, we can replace $\mathds{1}\{x_{aij}'b > x_{\tilde{a}ij}'b\}$ with $\tilde{K}_{\sigma_{\mathcal{A}_{ij}}}( x_{aij}'b - x_{\tilde{a}ij}'b)$ where $\lim_{v \rightarrow -\infty} \tilde{K}(v) = 0$ and $\lim_{v \rightarrow \infty} \tilde{K}(v) = 1$.
	
\section{Monte Carlo Simulations} \label{sec:montecarlo}

I use a monte carlo experiment to study the performance of the smoothed PMR estimator proposed in Section \ref{sec:smooth}, and compare it with that of the simulated likelihood estimator. I simulate 500 data sets from a sequential search model as described in Section \ref{sec:model} with 5000 consumers and 30 products. Product 1 is considered as the outside option. 

The unobserved quality $\xi$ of a product is correlated with both its price (in observed characteristics $x$) and its position (in search cost variable $z$). More specifically, the endogenous observed characteristics (price) is positively correlated with the unobserved quality; the endogenous search cost variable (position of product on the list page) is negatively correlated with unobserved quality\footnote{Products with higher quality $\xi$ are ranked higher, thus having smaller values of the position variable.}. I let price and position to be the only endogenous variables in $x$ and $z$ for simplicity. The true distribution of the match value $\epsilon$ is N(0,3). Consumers' taste shock $\eta$ is generated from N(0,0.5). The individual fixed effect $\nu$ is abstracted away in the simulated data set because it drops out when we construct the outcome variable by taking the difference of reservation utilities between the product pair of interest.  

The smoothing function $K$ of the PMR estimator is the pdf of standard normal. Bandwidth is set to be $N^{-\frac{1}{5}}$, where $N$ is the number of consumer pairs used to compute the estimator. The number of consumer pairs varies across simulations because the set of consumers who end up searching the product pair of interest varies across simulated data sets. On average, the number of consumer pairs across data sets is 1700. We don't have ${5000 \choose 2}$ number of consumer pairs because we only use consumers who search at least one of the product pair of interest. 

The simulated likelihood estimator utilizes consumers full consideration set data, thus having significantly more observations than the PMR estimator\footnote{Ideally, we want to compare results when both PMR and likelihood estimators use data on the search order of one product pair, but likelihood estimation with such data is not possible. Table \ref{tab:PMRLL} shows that PMR performs better than likelihood estimator even with less data. Note also that the PMR estimator can use full consideration set data by adding up the objectives of all product pairs and gain more efficiency.}. I follow the literature and compute the simulated likelihood using the logit-smoothed AR simulator with 50 draws of the match value from the assumed distribution and a scaling factor of $1/3$. I consider two scenarios of the simulated likelihood estimation: match values are drawn from (1) the misspecified distribution N(0,1), and (2) the true distribution N(0,3). Importantly, both scenarios ignore the presence of endogenous unobserved product quality.

Table \ref{tab:PMRLL} shows that the likelihood estimates in both scenarios have larger absolute bias than the PMR estimates for the coefficients on the endogenous price and the endogenous position variables. This is because both scenarios fail to take into account the endogenous unobserved product quality. The scenario with the true match value distribution (MVD) has smaller absolute bias than the one with the misspecified MVD. Thus, comparing likelihood estimates using the true MVD with those using the misspecified MVD helps us understand how misspecification of MVD biases the results; comparing the PMR estimates with the likelihood estimates using the true MVD helps us understand how ignoring endogeneity biases the results.\footnote{The likelihood estimates on $\gamma_{position}$ with true MVD has relatively small absolute bias because I generate the data such that the correlation between position and unobserved quality is small conditional on price.}

The likelihood estimates of the endogenous price coefficient is upward biased. Intuitively, the product with higher price is likely to have higher unobserved product quality, but price and product quality work in different directions on the reservation utility. So if we ignore the presence of unobserved product quality, then we would wrongly think that price has a smaller absolute effect on the reservation utility. This leads to an upward biased price coefficient because price coefficient is negative. Moreover, misspecifying the MVD from N(0,3) to N(0,1) leads to an upward biased estimate of the position coefficient. This is because the true $G^{-1}$ function is steeper than the misspecified $G^{-1}$ function for most search cost values generated in the simulated data set (see Figure \ref{fig:Ginv_simulation}). If a product is moved to a larger position, then the associated decrease in reservation utility due to the decrease in the true $G^{-1}$ function would be wrongly attributed to the position coefficient. Therefore, we would wrongly conclude the position effect to be larger than the truth.

\begin{table}
\begin{threeparttable}[H]
\caption{Simulation Results of Smoothed PMR vs Simulated Likelihood}
\label{tab:PMRLL}
\begin{tabular}{lcc|cccc}
\hline \hline
& \multicolumn{2}{c}{\text{PMR}} & \multicolumn{4}{c}{\text{Likelihood}}  \\
& & &  \multicolumn{2}{c}{\textit{Misspecified MVD}}  &  \multicolumn{2}{c}{\textit{True MVD}} \\
&\textbf{$\beta_{price}$} & \textbf{$\gamma_{position}$} &\textbf{$\beta_{price}$}&\textbf{$\gamma_{position}$} &\textbf{$\beta_{price}$}&\textbf{$\gamma_{position}$} \\ \hline
\text{True Value} & -1 & 0.2 & -1 & 0.2 & -1 & 0.2 \\
\text{Mean Bias}&0.0040&-0.0261&0.5311&0.1044&0.4118&-0.0612\\
\text{MSE}&0.0638&0.0104&0.3063&0.0112&0.1762&0.0038\\ 
\text{Mean Obs}&1,700 &1,700 & 150,000 & 150,000 & 150,000 & 150,000 \\ \hline \hline
\end{tabular}
\begin{tablenotes}[normal,flushleft]
\item \textit{Notes} This table shows estimates of coefficients of the endogenous variables: price and position using 500 simulated data sets. The smoothed PMR estimator uses data on the search order of one product pair, and the simulated likelihood estimator uses full consideration set data. Likelihood estimation assumes away endogeneity. True match value distribution (MVD) is N(0,3) and misspecified MVD is N(0,1).
\end{tablenotes}
\end{threeparttable}
\end{table}

\begin{figure}[hp]
\caption[]{$G^{-1}$ of True vs. Misspecified Match Value Distribution in Simulation}
\label{fig:Ginv_simulation}
\centering
\begin{minipage}{.8\linewidth}
  \includegraphics[width=\linewidth]{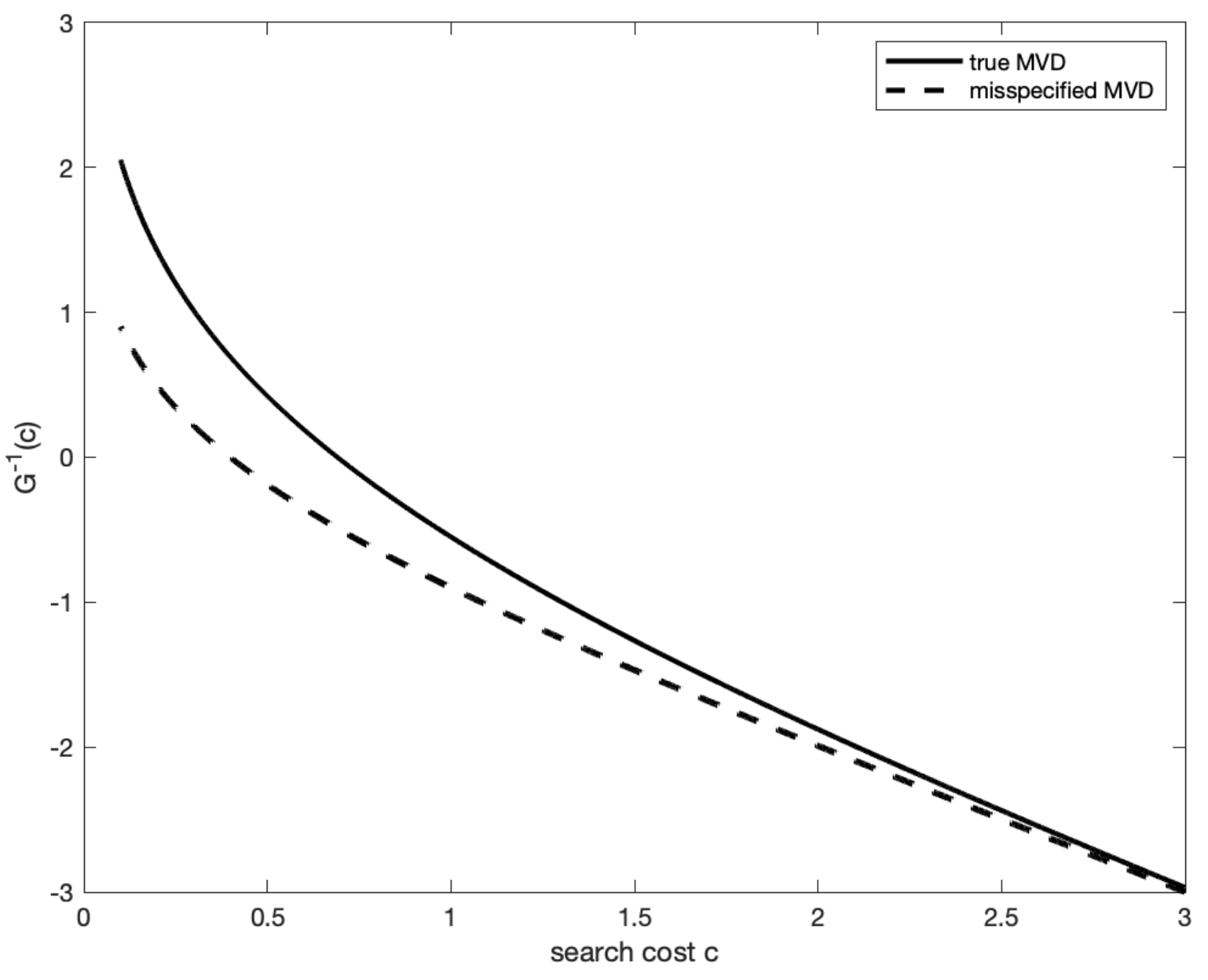} \\
\footnotesize
\emph{Notes} This figure plots the the $G^{-1}$ associated with the true match value distribution N(0,3) in the monte carlo simulation, against the $G^{-1}$ of the misspecified match value distribution N(0,1).
\end{minipage}
\end{figure} 

\section{Empirical Application} \label{sec:application}

This section lays out an empirical application using a data set on consumers searching for hotels online over an eight-month
period between November 1, 2012, and June 30, 2013. The data set is provided by a leading Online Travel Agency, Expedia\footnote{The dataset is available at \url{www.kaggle.com/c/expedia-personalized-sort/data}.}. The data set is briefly described in \ref{sec:kaggle_data}. Section \ref{sec:kaggle_est} shows the estimated preference and search cost parameters using both the PMR estimator and the simulated likelihood estimator, and discusses various sources of biases of the likelihood estimator. Section \ref{sec:kaggle_ginv} presents the estimated $G^{-1}$ function, and compare it with the implied $G^{-1}$ function by assuming a specific match value distribution. 

\subsection{Data} \label{sec:kaggle_data}

We now explain how consumers search for hotels on Expedia and what variables the data set contains. First, the consumer submits a search query on Expedia by specifying details of his trip, such as the destination (city, country), the travel dates, and the number of travelers and rooms requested. I observe all the variables above, in addition to booking window (the number of days before the beginning of the trip). Second, the consumer gets a search impression that contains an ordered list of hotels and their characteristics (list page). I observe the hotel ID, its position on the list page, and its characteristics (price, star rating, review score, location score, chain, and promotion indicator). Third, the consumer can click on a hotel to reveal his match value of that hotel. Then he can either return to the previous screen to click on another hotel, leave the site without purchasing, or make a purchase. I observe all clicks and purchases consumers make. The dataset contains 4.5 million observations of 166,036 search impressions for 788 destinations in total. For the empirical application, I use the 4 largest destinations following the literature of estimating sequential search models using the Expedia data set. More details of the data set can be found in \cite{ursu2018}. 

The data set randomly assigns consumers to two different ranking algorithms of products on the search impression page:  (1) a random ranking, where products are ranked randomly, and (2) the Expedia ranking, where products are ranked by relevance. Table \ref{tab:sumstatD4} shows summary statistics of the random ranking and the non-random ranking samples used in the estimation. The two samples are mostly balanced on hotel level characteristics. On the impression level characteristics, the random ranking sample has longer booking windows on average, which might cause the concern that consumers are not randomly assigned to different ranking algorithms. However, \cite{ursu2018} performs randomization checks in its Online Appendix B by showing that most of the consumer characteristics across the two samples are not significantly different, and for the characteristics that are significantly different, the magnitudes are very small. It also shows that the hotel positions are randomly generated in the random ranking sample. The summary statistics of all 788 destinations are in Table \ref{tab:sumstat788}, which are quite similar to the estimation sample in Table \ref{tab:sumstatD4}. 

\begin{table}
\begin{threeparttable}[H]
\caption{Summary Statistics, Random Ranking vs. Non-Random Ranking}
\label{tab:sumstatD4}
\begin{tabular}{lcccccc}
\hline \hline
\textit{Random ranking sample} &  \\
& Mean & SD & Median & Min & Max & Obs \\
\textit{Hotel level} &  \\
\quad \text{Price} (\$100) &1.77&1.19&1.49&0.17&10.00&121,813 \\
\quad \text{Star rating} &3.56&0.93&4.00&1.00&5.00&121,813 \\
\quad \text{Review score} &3.95&0.72&4.00&0.00&5.00&121,813 \\
\quad \text{Chain} &0.76&0.43&1.00&0.00&1.00&121,813 \\
\quad \text{Location score} &3.65&1.29&3.99&0.00&5.97&121,813 \\
\quad \text{Promotion}  &0.38&0.49&0.00&0.00&1.00&121,813 \\
\quad \text{Position} &18.16&10.71&18.00&1.00&39.00&121,813 \\
\textit{Impression level} &  \\
\quad \text{Number of hotels displayed} &28.09&8.13&32.00&5.00&36.00&4,336 \\
\quad \text{Booking window (days)} &61.51&66.25&37.00&0.00&487.00&4,336 \\
\quad \text{Click} &1.15&0.72&1.00&1.00&15.00&4,336  \\
\quad \text{Purchase} &0.06&0.24&0.00&0.00&1.00&4,336 \\ \hline 
\textit{Non-Random ranking sample} &  \\
& Mean & SD & Median & Min & Max & Obs \\
\textit{Hotel level} &  \\
\quad \text{Price} (\$100) &1.89&1.14&1.69&0.17&9.99&195,640 \\
\quad \text{Star rating} &3.74&0.82&4.00&1.00&5.00&195,640 \\
\quad \text{Review score}  &4.05&0.62&4.00&0.00&5.00&195,640 \\
\quad \text{Chain} &0.68&0.47&1.00&0.00&1.00&195,640 \\
\quad \text{Location score} &4.25&1.22&4.22&0.00&5.97&195,640 \\
\quad \text{Promotion} &0.48&0.50&0.00&0.00&1.00&195,640 \\
\quad \text{Position} &18.58&10.74&19.00&1.00&40.00&195,640 \\
\textit{Impression level} &  \\
\quad \text{Number of hotels displayed} &29.69&6.79&32.00&5.00&38.00&6,590 \\
\quad \text{Booking window (days)} &39.23&54.05&19.00&0.00&443.00&6,590  \\
\quad \text{Click} &1.13&0.69&1.00&1.00&20.00&6,590 \\
\quad \text{Purchase} &0.88&0.32&1.00&0.00&1.00&6,590 \\ \hline \hline
\end{tabular}
\begin{tablenotes}[normal,flushleft]
\item \textit{Notes} This table shows the summary statistics of the 4 largest destinations of the Expedia data set, for both the random ranking sample and the non-random ranking sample. 
\end{tablenotes}
\end{threeparttable}
\end{table} 

When constructing search impressions used for estimation\footnote{Since search impressions by the same consumer cannot be linked in the data set, we treat a search impression and a consumer as the same.}, I follow \cite{ursu2018} and \cite{chungetal2019} by removing impressions including any hotel with unrealistically high or low price per night (less than \$10 or more than \$1000 per night), and impressions including any hotel with potential price error (total price paid exceeding 130\% of price per night multiplied by the number of nights). Star ratings are assigned by Expedia according to the type of hotel, the level of luxury, and the amenities provided. Review score is the average of the review scores from consumers who made reservations for the hotel on Expedia in the past. Chain is an indicator for whether the hotel belongs to a Chain. Location score designed by Expedia ranges from 0 to 7, and measures the hotel's location centrality and surrounding amenities, et cetera. Promotion is an indicator variable for whether the hotel has an ongoing promotion.

\subsection{Preference and Search Cost Estimates} \label{sec:kaggle_est}

\subsubsection{PMR Estimates}

We are able to construct the outcome variable for smoothed PMR estimation because we observe the consideration set of the consumers, i.e. the hotels they click on. For any pair of hotels, we only keep consumers who click on exactly one of the hotel in the pair. For these consumers, we know that their reservation utility of the hotel being clicked on must be higher than the other hotel of the pair because the other hotel is not searched.  

The search cost variables include the position of the hotel on the list page and the booking window of the consumer. \cite{ursu2018} shows that position of the hotel affects the consumer's search cost and not the prior utility, using the random ranking sample of the same data set. I allow booking window to potentially affect search cost because when a consumer is closer to his check-in date, each time unit spent searching could be more costly. Similar arguments appear in \cite{chenyao2017}, \cite{mcdevitt2014} and \cite{pinnaseiler} in contexts of online hotel search, grocery shopping and home services. The length of booking window should not affect the consumer's preference for the hotel because his utility for the hotel only realizes when he checks in not when he makes the reservation.\footnote{See \cite{chenyao2017} for a similar argument.}There might be concerns that the booking window affects how the inside goods compare with the outside option. But since the PMR estimator is consistent fixing any product pair of our choice, we can choose product pairs that do not involve the outside option. More broadly, even in the case where the booking window affects the consumer's prior utility, we can still apply the adapted estimator in Section \ref{sec:zutil} for estimation. 

The observed characteristics of the hotels prior to search include price per night, promotion indicator, star rating, review score, location score and hotel chain indicator. Among the observed characteristics and search cost variables, price per night, promotion indicator, position and booking window vary across search impressions, whereas star rating, review score, chain indicator and location score are invariant across search impressions during the time frame when the data was collected. The preference for impression-invariant variables is estimated using the estimator described in Section \ref{sec:invar}.\footnote{In order to apply the estimator in Section \ref{sec:invar}, we assume that position and price are the potential endogenous variables and two hotels with the same price and position have iid distributed unobserved quality, so Assumption \ref{as:endvar} holds.} I use the estimator described in Section \ref{sec:nonparam} to estimate both the preference for the impression-varying variables and the $G^{-1}$ function implied by the match value distribution in the data. This version of the estimator only requires normalization of the preference coefficient, rather than both preference and search cost coefficients as described in Section \ref{sec:smooth}. Thus, I normalize the price coefficient to be -1 throughout the estimation.

To increase the efficiency of the estimator, I sum up the objectives of all possible hotel pairs from eligible hotels. Eligible hotels are hotels with more than 100 search impressions. This results in 626 hotel pairs in the random ranking sample and 1,558 hotel pairs in the non-random ranking sample. Thus, the number of observations I use to estimate the impression-varying variables is the sum of observations across all hotel pairs: 2,746,687 for the random ranking sample and 5,515,197 for the non-random ranking sample. This means that each hotel pair has on average 4,387 and 3,540 observations, respectively. The impression-invariant variables are estimated by summing up the objectives of all eligible search impressions. Eligible search impressions are impressions where more than 24 hotels are displayed on the list page. This results in 3,484 and 5,752 search impressions in the two samples. The number of observations used to estimate the impression-invariant variables is the sum of observations across all search impressions: 121,492 and 198,339 for the two samples. This implies that each search impression has on average 34 observations for both samples. 

\begin{table}
\begin{threeparttable}[H]
\caption{PMR Estimates}
\label{tab:kaggle_PMR}
\begin{tabular}{lcc}
\hline \hline
& \textit{Random Ranking Sample} & \textit{Non-Random Ranking Sample} \\ 
& & \\ 
\textit{Search Cost} & &  \\
\quad \text{Position}  &0.0012$^{***}$ & 0.0034$^{***}$  \\
& (0.0005, \hspace{0.3mm} 0.0017) & (0.0027, \hspace{0.3mm} 0.0840)  \\
\quad \text{Booking window (days)} &-0.6120 &-0.5240  \\
& (-0.6176, \hspace{0.3mm} 0.1772) & (-1.2438, \hspace{0.3mm} 0.4415)  \\
& & \\
\textit{Utility} & & \\
\quad \text{Price} (\$100) & -1 & -1  \\
& & \\
\quad \text{Star rating}  &1.0956$^{***}$  & 0.6133$^{***}$  \\
& (0.8744, \hspace{0.3mm} 1.3883) &  (0.4966, \hspace{0.3mm} 0.6530)  \\
\quad \text{Review score} &-0.1731$^{**}$  &0.0210  \\
& (-0.3350, \hspace{0.3mm} -0.0093) &  (-0.1014, \hspace{0.3mm} 0.0899)  \\
\quad \text{Chain} &0.1481$^{*}$ &0.1985$^{***}$  \\
& (-0.0043, \hspace{0.3mm} 0.4516) &  (0.1030, \hspace{0.3mm} 0.2878)  \\
\quad \text{Location score} &0.5358$^{***}$ &0.3304$^{***}$ \\
& (0.3071, \hspace{0.3mm} 0.5931) &  (0.2518, \hspace{0.3mm} 0.3740)  \\
\quad \text{Promotion} &0.1205$^{**}$   &0.0427  \\
& (0.0028, \hspace{0.3mm} 0.2073) &  (-0.0281, \hspace{0.3mm} 0.0780)  \\
& &  \\
\textit{Number of Observations} & &  \\
\quad \text{Impression-varying variables}  & 2,746,687  & 5,515,197 \\
\quad \text{Hotel pairs used}  & 626  & 1,558  \\
\quad \text{Impression-invariant variables} & 121,492  & 198,339  \\
\quad \text{Impressions used} & 3,484  & 5,752  \\
\quad \text{Total} & 2,868,179 & 5,713,536  \\ \hline \hline
\end{tabular}
\begin{tablenotes}[normal,flushleft]
\item \textit{Notes} This table shows the PMR estimates using the random ranking sample and the non-random ranking sample of the 4 largest destinations of the Expedia data. Impression-varying variables include: position, booking window, price, promotion. Impression-invariant variables include: star rating, review score, chain, location score. The parentheses are 95 percent confidence intervals computed using subsampling. Subsampling uses 350 replications. p-values are computed from two sided tests. $^* p < 0.1$; $^{**} p < 0.05$; $^{***} p < 0.01$.
\end{tablenotes}
\end{threeparttable}
\end{table} 

I apply the PMR estimator to both the random ranking sample and the non-random ranking sample. In the non-random ranking sample, the hotel's position can be correlated with the unobserved hotel quality due to Expedia's ranking algorithm, which creates one source of endogeneity. However, the position endogeneity does not arise in the random ranking sample because the hotel's position is randomly generated in this sample.\footnote{More details of the Expedia ranking algorithm and the randomized ranking algorithm can be found in \cite{ursu2018}.} Another source of endogeneity is that the hotel's price can also be correlated with the unobserved hotel quality. The price endogeneity is potentially present in both samples. 
 
Table \ref{tab:kaggle_PMR} shows the PMR estimates for each of the samples. The estimated search cost and preference parameters in the non-random ranking sample are mostly close to those in the random ranking sample.\footnote{The estimates across two samples using PMR are much closer than those using simulated likelihood if we compare Table \ref{tab:kaggle_PMR} and Table \ref{tab:kaggle_LL}.} This is consistent with the argument that the PMR estimator can correct for endogeneity bias. The results show that the consumers incur higher search costs if they search hotels further down on the list page. Their search costs are also higher if they are closer to the actual travel date, thus more time constrained, although the estimates are relatively noisy. The results also show that consumers prefer hotels with higher star rating, higher location score, belonging to a hotel chain or on promotion. Yet, the review scores have no statistically significant effect or slightly negative effective on consumers' utility. This is potentially due to the nonlinearity of the effects of review scores.\footnote{\cite{ursu2018} and \cite{chungetal2019} also find mostly negative or non statistically significant effects of review scores using each destination of the random ranking sample.} 

The inference is conducted using the subsampling procedure, following the literature on maximum score estimators \citep{fox2007,fox2013}. I construct the 95 percent confidence intervals as in \cite{politis1994}. The centered p-values for the two sided tests are computed based on \cite{berg2010}. The exception is the position estimates of the non-random ranking sample. Since the empirical distribution of the estimates from subsamples has a very long tail, I construct its confidence interval from the raw empirical distribution and compute the uncentered p-values.

\subsubsection{Simulated Likelihood Estimates}

\begin{table}
\begin{threeparttable}[H]
\caption{Simulated Likelihood Estimates}
\label{tab:kaggle_LL}
\begin{tabular}{lcccc}
\hline \hline
& \multicolumn{2}{c}{\textit{Random Ranking Sample}} & \multicolumn{2}{c}{\textit{Non-Random Ranking Sample}} \\ 
& \multicolumn{2}{c}{\textit{Estimates}} & \multicolumn{2}{c}{\textit{Estimates}} \\ 
& & & & \\ \hline
& Raw & Normalized & Raw & Normalized \\

\textit{Search Cost} & & & & \\
\quad \text{Position} & 0.0079$^{***}$ &  & 0.0291 &  \\
& (0.0001) & & (0.0527) & \\
\quad \text{Booking window (days)} & -0.0014$^{***}$ & & 0.0031&  \\
& (0.0000) & & (0.0975) & \\
& & & & \ \\
\textit{Utility} & & & &\\
\quad \text{Price} (\$100) & -0.2556$^{***}$ & -1 & -0.7113$^{***}$ & -1 \\
& (0.0182) & & (0.0426) & \\
\quad \text{Star rating} & 0.1822$^{***}$ & 0.7127 & 0.3874$^{***}$ & 0.5447 \\
& (0.0113) & & (0.0174) & \\
\quad \text{Review score} & -0.0843$^{***}$ & -0.3299 & 0.3422$^{***}$ & 0.4811 \\
& (0.0089) & & (0.0122) & \\
\quad \text{Chain} & -0.0391$^{**}$ & -0.1530 & 0.0821$^{***}$ & 0.1155 \\
& (0.0164) & & (0.0123) & \\
\quad \text{Location score} & 0.1034$^{***}$ & 0.4044 & 0.4017$^{***}$ & 0.5647 \\
& (0.0093) & & (0.0237) & \\
\quad \text{Promotion} & 0.1214$^{***}$ & 0.4748 & -0.0012 & -0.0017 \\
& (0.0195) & & (0.0140) & \\
& & & & \ \\
\textit{Number of Observations} & & & & \\
\quad \text{Impressions} & 4,336 & 4,336 & 6,590 & 6,590  \\
\quad \text{Total} & 121,813 & 121,813 & 195,640 & 195,640 \\ \hline \hline
\end{tabular}
\begin{tablenotes}[normal,flushleft]
\item \textit{Notes} This table shows the raw and normalized (by price coefficient) simulated likelihood estimates, using the random ranking sample and the non-random ranking sample of the 4 largest destinations of the Expedia data. Estimation assumes match value distribution is N(0,1). The parentheses are the standard errors. $^* p < 0.1$; $^{**} p < 0.05$; $^{***} p < 0.01$.
\end{tablenotes}
\end{threeparttable}
\end{table} 

Table \ref{tab:kaggle_LL} shows the simulated likelihood estimates for both the random ranking sample and the non-random ranking sample. The estimation assumes that the match value is distributed as N(0,1) and the pre-search taste shock is also distributed as N(0,1). I use a logit-smoothed AR simulator with 50 draws and a scaling factor of $1/3$ to simulate the probabilities, following \cite{ursu2018}. In order to compare the simulated likelihod estimates with the PMR estimates, I also show the utility parameters after normalizing the price coefficient to be -1. This also makes the estimates more interpretable in terms of the dollar values. 

Comparing results from the two samples, we find that both the raw search cost estimates and the normalized preference estimates are quite different across samples. More importantly, most of the differences are much larger than those of the PMR estimates in Table \ref{tab:kaggle_PMR}.\footnote{The exceptions are star rating and booking window. However, both PMR and likelihood estimates of booking window are noisy, thus direct comparison of point estimates is less informative.} This suggests that the simulated likelihood estimates are more susceptible to the bias caused by position endogeneity. More specifically, the raw estimate of price coefficient in the non-random ranking sample is downward biased compared to the estimate in the random ranking sample. The bias could arise from the fact that position, price and unobserved quality are correlated with each other in the non-random ranking sample. Moreover, the raw estimates of the preference parameters in the non-random ranking sample are upward biased compared to those in the random ranking sample, except promotion indicator. The upward bias comes from the fact that hotels with better characteristics (i.e. higher star rating, review and location score) are ranked higher on the list page under the Expedia's ranking algorithm. Since the higher ranked hotels are likely to have higher unobserved quality, this will increase the reservation utilities of these hotels even more. So if we ignore the presence of the unobserved quality, we would wrongly attribute its effect on reservation utility to the hotel characteristics, causing an upward bias.\footnote{Note that this does not contradict with Assumption \ref{as:endvar} because the other hotel characteristics are correlated with unobserved quality only through the endogenous price and position variables.} 

\subsubsection{The Position and Booking Window Effects: PMR versus Likelihood}

To understand the estimated position and booking window effects using both estimators, we need to look at their effects in dollar values rather than in raw estimates. Unlike the preference parameters, we can not simply divide the raw estimates of position and booking window by the estimated price coefficients because they affect search costs exponentially rather than linearly. Moreover, because the effects are exponential, one unit increase of position (booking window) at different slots (days) has different dollar values. 

Table \ref{tab:dollareffect} shows the position (booking window) effects in dollar values at slot (day) 1, 5, 10 and 20. It suggests that the position effect increases as consumers move further down on the list page. More interestingly, consumers who search for hotels one day before the trip have much higher search costs than those who search two days before, according to the PMR estimates. The booking window effect decreases drastically as consumers search for hotels more days in advance: if they search 20 versus 21 days before the check-in, their search costs are almost the same. 

\begin{table}
\begin{threeparttable}[H]
\caption{The Dollar Values of the Effects of Position and Booking Window}
\label{tab:dollareffect}
\begin{tabular}{lcccc}
\hline \hline
& \multicolumn{2}{r}{\textit{Random Ranking Sample}} & \multicolumn{2}{c}{\textit{Non-Random Ranking Sample}} \\ 
& & & & \\ \hline
& PMR  & Likelihood  & PMR  & Likelihood  \\
& & & & \\
\multicolumn{2}{l}{\textit{Position increases by 1 slot}} & & & \\
\quad \text{At slot 1}  &0.1153&3.1094&0.3415&4.2786 \\
\quad \text{At slot 5}&0.1158&3.2086&0.3461&4.8073\\
\quad \text{At slot 10}&0.1165&3.3372&0.3521&5.5611\\
\quad \text{At slot 20}&0.1178&3.6098&0.3642&7.4417\\
& & & & \\
 \multicolumn{2}{c}{\textit{Booking Window increases by 1 day}} & & &\\
\quad \text{At day 1} &-24.8215&-0.5550&-24.1508&0.4367 \\
\quad \text{At day 5}&-2.1460&-0.5519&-2.9693&0.4422\\
\quad \text{At day 10}&-0.1006&-0.5480&-0.2162&0.4490\\
\quad \text{At day 20}&-0.0002&-0.5402&-0.0011&0.4631\\
\hline \hline
\end{tabular}
\begin{tablenotes}[normal,flushleft]
\item \textit{Notes} This table shows the dollar value effects of position and booking window computed from the PMR and simulated likelihood estimates in Table \ref{tab:kaggle_PMR} and \ref{tab:kaggle_LL} for both samples. The top panel shows the dollar value effects of one unit increase of position at different slots (1,5,10,20). The bottom panel shows the the dollar value effects of one unit increase of booking window at different days (1,5,10,20). 
\end{tablenotes}
\end{threeparttable}
\end{table} 

\begin{table}
\begin{threeparttable}[H]
\caption{The Biases of Position Effects in Dollar Values}
\label{tab:PositionBias}
\begin{tabular}{lccc}
\hline \hline
& (1) & (2) & (3) \\
& \text{Likelihood non-random}  & \text{Likelihood random} & \text{Likelihood non-random}  \\
& \text{vs. Likelihood random} &  \text{vs. PMR random} &  \text{vs. PMR non-random} \\ \hline
& & & \\ 
\textit{Mechanism} &  \text{endogeneity} &  \text{misspecification} &  \text{endogeneity} + \text{misspecification} \\ [0.5ex]
\textit{Direction} & \text{upward} & \text{upward} & \text{upward} \\ [0.5ex]
\textit{Magnitude} &&& \\
\quad \text{At slot 1} & 1.1692 & 2.9941 & 3.9371 \\
\quad \text{At slot 5} & 1.5987 & 3.0928 & 4.4612 \\
\quad \text{At slot 10} & 2.2239 & 3.2207 & 5.2090 \\
\quad \text{At slot 20} & 3.8319 & 3.4920 & 7.0775 \\
\hline \hline
\end{tabular}
\begin{tablenotes}[normal,flushleft]
\item \textit{Notes} This table summarizes the different mechanisms of position effect biases by comparing the estimates using different estimators and samples. For example, column (1) compares likelihood estimated position effects using the non-random ranking sample with the likelihood estimates using the random ranking sample (estimates obtained from Table \ref{tab:dollareffect}). It also shows the direction and magnitude of biases at different slots for each mechanism.
\end{tablenotes}
\end{threeparttable}
\end{table} 

Given the dollar value effects in Table \ref{tab:dollareffect}, we can also analyze the magnitude and direction of different mechanisms of biases in the position effects estimated using the likelihood approach. I summarize these in Table \ref{tab:PositionBias}. Column (1) compares the likelihood estimates of the non-random ranking sample with those of the random ranking sample. We find that the position endogeneity creates an upward biased estimate of the position effect of \$1.17 at slot 1 and \$3.83 at slot 20. The intuition is simple: under the expedia ranking algorithm, the higher ranked hotels are likely to have higher unobserved quality, so unobserved product quality exacerbates the effect of position on reservation utility. Thus, ignoring the endogenous unobserved quality leads to an upward biased estimate of the position effect. 

Column (2) compares the likelihood estimates with the PMR estimates from the random ranking sample. Since the positions are randomly generated in this sample, the only mechanism of bias is misspecification of match value distribution.\footnote{Price endogeneity can be another mechanism of bias, but it mainly causes bias in preference estimates rather than position estimates.} The misspecification causes an upward bias of the estimated position effects of \$2.99 at slot 1 and \$3.49 at slot 20. The upward bias exists because the true $G^{-1}$ function estimated from the data is steeper than the misspecified $G^{-1}$ function implied from the assumed N(0,1) match value distribution (as will be shown in Section \ref{sec:kaggle_ginv} Figure \ref{fig:Ginv}). Intuitively, if a hotel is moved further down the list page, then the associated decrease in reservation utility due to the decrease in the true $G^{-1}$ function would be wrongly attributed to the position effect, thus causing an upward bias. This is also consistent with the monte carlo simulation results in the previous section. 

Column (3) compares the likelihood estimates with the PMR estimates from the non-random ranking sample. In this sample, both position endogeneity and misspecification of match value distribution exist as mechanisms for bias. Since each of the mechanism causes an upward bias, the two mechanisms combined lead to an even larger upward bias. At slot 1, the misspecification mechanism contributes more to the total bias than the endogeneity mechanism, whereas at slot 20, the endogeneity mechanism contributes slightly more. The total bias is \$3.94 at slot 1 and \$7.08 at slot 20.

To put our results in the literature, \cite{ursu2018} and \cite{chungetal2019} both use the simulate likelihood apporach to estimate each of the four largest destinations of the Expedia data set. They use the random ranking sample to remove the endogeneity concern of the position variable, although the price endogeneity concern might still be present.\footnote{\cite{chungetal2019} proposes in Appendix F to deal with potential price endogeneity by constructing a proxy for product quality.} They assume match value is distributed as standard normal and abstract away the pre-search taste shock. Moreover, they model search cost by position and a constant term, whereas we replace the constant term with consumers' booking window.\footnote{Since each consumer's booking window is the same for all hotels, we can think of booking window as a consumer-specific constant term.} Therefore, we can not directly compare the PMR estimates in this paper to their estimates, due to different sampling and modeling assumptions; we only discuss here the range of the dollar values of their estimated position effect. \cite{ursu2018} finds position effects at slot 1 ranging from \$0.55 to \$3.19; \cite{chungetal2019} estimates range from \$0.95 to \$18.33. Our simulated likelihood results using the random ranking sample is \$3.11, which falls into their neighborhood of results. However, our PMR results at slot 1 ranges from \$0.11 to \$0.34, which is slightly smaller than their estimates. This is consistent with our analysis of bias above. 

Other studies of the position effect use different data sets and modeling assumptions. For example, \cite{chenyao2017} allow consumers to apply refinement tools and estimate position effect to be \$0.21, which is similar to our results. With sorting and filtering options, the position variable is unlikely to be endogeneous because it is determined by the observable filtering characteristics. That is, the hotel's position is unlikely to be correlated with its unobserved product quality once we condition on the observable filtering characteristics. They deal with potential price endogeneity by a control function approach. Since they are using a different data set with different consumers, whether misspecification of match value distribution exists is not clear. Due to the reasons mentioned above, their results are close to our PMR estimates. \cite{de2017} estimates the position effect to be from \$7.76 to \$35.15; \cite{koulayev2014} finds position effects ranging from \$2.93 to \$18.78; \cite{ghose2012surviving} finds the position effect to be \$6.24. The difference likely comes from potential position and price endogeneity, as well as different modeling assumptions of consumers' search processes and search costs.

\subsection{Match Value Distribution Estimates} \label{sec:kaggle_ginv}

As described above, in addition to estimating the preference and search cost parameters, I also nonparametrically estimate the $G^{-1}$ function using the estimator described in Section \ref{sec:nonparam}. Recall that the function $G$ is defined as in equation \eqref{eq:G}, so the estimated $G$ will inform us about the true match value distribution in the data. I use polynomials with degree 3 of the log of search cost as the basis. For the ease of computation, I assume that $G^{-1}$ is the same for all hotels. 

Figure \ref{fig:Ginv} shows the $G^{-1}$ estimated using the random ranking and non-random ranking sample (the solid black and grey lines in the figure). The estimated functions using both samples are quite close to each other, consistent with the fact that consumers are randomly assigned to different ranking algorithms (at least consumers in different samples don't have systematically very different match value distributions). The dashed black and grey lines are the implied $G^{-1}$ if we assume that match value is distributed as N(0,1). In order to compare the implied $G^{-1}$ (dashed) with the estimated $G^{-1}$ (solid), we normalize the implied $G^{-1}$ by the likelihood estimated price coefficient for each of the two samples, because we fix the price coefficient to be -1 when estimating the $G^{-1}$ function. Operationally, the normalization involves dividing the variance of match value distribution (which is 1) by the squared raw price coefficient estimated using simulated likelihood on each of the two samples, and then compute the associated $G^{-1}$ from the normalized match value distribution.\footnote{Note that normalization implies $\frac{u}{\abs{\beta_{\text{price}}}} = \frac{\delta}{\abs{\beta_{\text{price}}}} + \frac{\epsilon}{\abs{\beta_{\text{price}}}}$. Thus, if $\epsilon \sim$ N(0,1), then $\frac{\epsilon}{\abs{\beta_{\text{price}}}} \sim $ N(0, 1/$\beta_{\text{price}}^{2}$).} 

Comparing the solid with the dashed lines, we can see that the estimated $G^{-1}$ functions are steeper than the assumed $G^{-1}$ functions for both of the two samples. This suggests that the assumption of match value distribution following N(0,1) is very likely to be misspecified. This misspecification leads to an upward biased estimates of the dollar value of the position effect. The reason is: as the hotel's position increases, the associated decrease in reservation utility due to the decrease in the true $G^{-1}$ function would be wrongly attributed to the position effect.

\begin{figure}[hp]
\caption[]{Estimated $G^{-1}$ vs. Assumed $G^{-1}$ for Both Samples}
\label{fig:Ginv}
\centering
\begin{minipage}{.8\linewidth}
  \includegraphics[width=\linewidth]{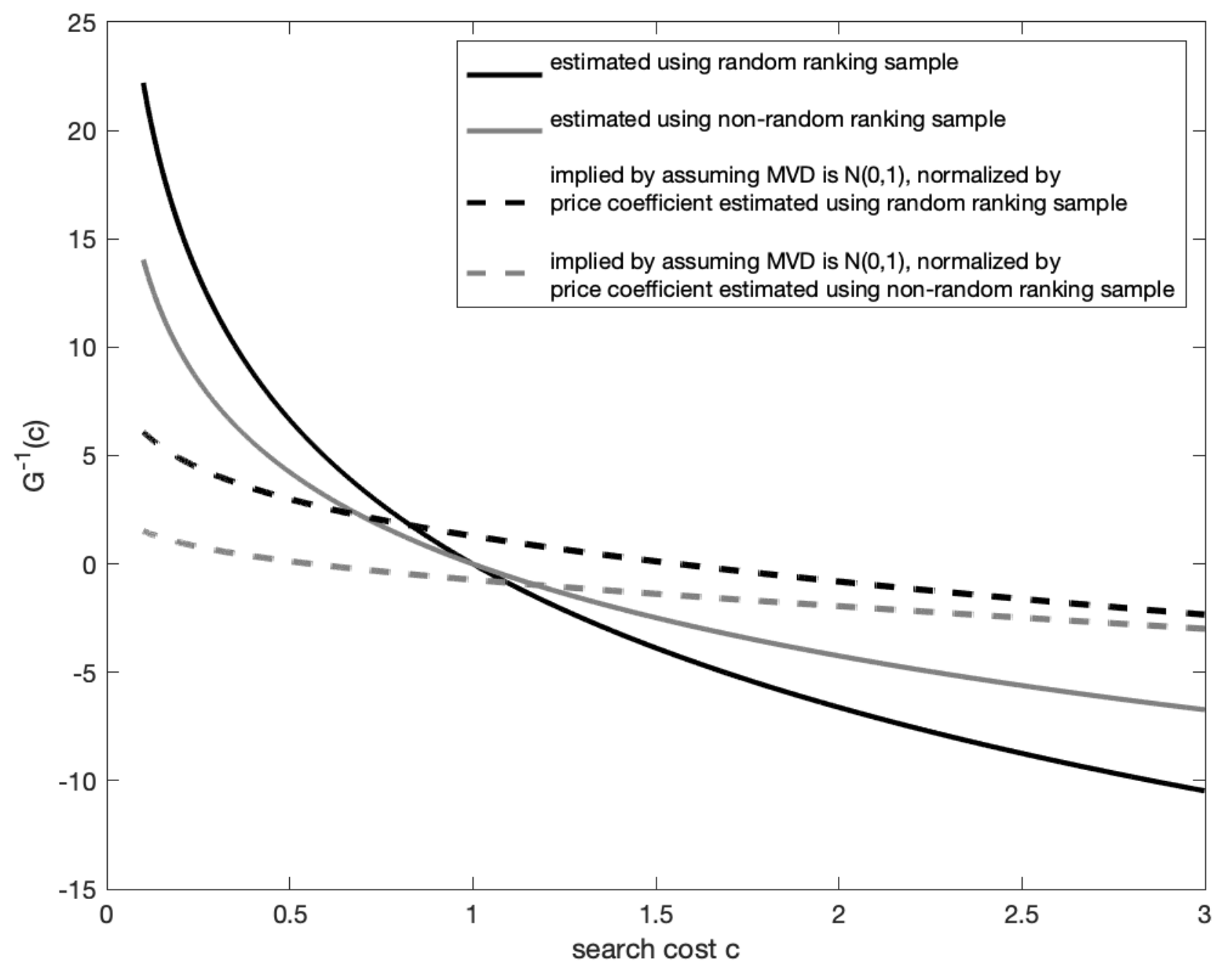} \\
\footnotesize
\emph{Notes} This figure plots the estimated $G^{-1}$ using the two samples of the Expedia data, as well as the $G^{-1}$ implied by assuming match value distribution to be N(0,1). The implied $G^{-1}$ is  after we normalize by the likelihood estimated price coefficient for each of the two samples.
\end{minipage}
\end{figure}

\section{Extensions} \label{sec:extension}

In this section, I will discuss some interesting extensions of the baseline model and estimator. Section \ref{sec:nonparam} shows how we can nonparametrically estimate the $G_j^{-1}$ function and use the estimates to test whether a candidate match value distribution is consistent with what the data implies. Section \ref{sec:zutil} deals with the situation where some search cost variables are also part of consumers' utility. Section \ref{sec:invar} describes how to estimate the preference for covariates that are same for all consumers. Section \ref{sec:hetero} extends the model to incorporate heterogeneous preference and search cost parameters and discusses estimation. 

\subsection{Nonparametric Estimation of $G_j^{-1}$} \label{sec:nonparam}

Before discussing how to estimate the inverse marginal benefit function $G_j^{-1}$ for each product $j$, I want to describe what the estimator should be if $G_j^{-1}$ is known. According to equation \eqref{eq:G}, if the researcher happens to know the true match value distribution for some reason, then she knows the $G_j^{-1}$ function for each product. The estimators introduced in Sections \ref{sec:nonsmooth} and \ref{sec:smooth} are still valid, but now she can take advantage of the fact that the match value distribution is known. For notational simplicity, define 
\begin{align*}
\tilde{h} (b,m; x_a, z_a) = G^{-1}_i ( \exp (z_{ai}' m )) - G^{-1}_j ( \exp (z_{aj}' m )) + x_{aij}' b
\end{align*}
It is then straightforward to see that we can just maximize the following objective for consistent estimates of the preference and search cost parameters:
\begin{align*}
\tilde{GQ}_{\mathcal{A}_{ij}} (b,m) =  {\abs{\mathcal{A}_{ij}} \choose 2}^{-1} \sum_{a \neq \tilde{a} \in \mathcal{A}_{ij}} 
\bigg[ & \mathds{1}\{ \tilde{h} (b,m; x_a, z_a) > \tilde{h} (b,m; x_{\tilde{a}}, z_{\tilde{a}}) \} \mathds{1}\{ S_{aij} > S_{\tilde{a}ij} \} \\
	  + & \mathds{1}\{ \tilde{h} (b,m; x_a, z_a) < \tilde{h} (b,m; x_{\tilde{a}}, z_{\tilde{a}}) \} \mathds{1}\{ S_{aij} < S_{\tilde{a}ij} \} \bigg]
\end{align*}

Now we turn to the more general case where the $G_j^{-1}$ function is unknown for all products. We assume that it can be well-approximated by linear sieves of degree $N$. More specifically, we approximate $G^{-1}_j(v)$ with $\sum_{n=1}^N \alpha_n^{j} \phi_n^{j} (v)$. Here $ \{ \phi_n^{j} \}_{n=1}^N$ are known basis functions and $ \{ \alpha_n^{j} \}_{n=1}^N$ are parameters to be estimated. For notational simplicity, define 
\begin{align} \label{eq:h}
h (b,m, \{ a_n^{i}, a_n^{j} \}_{n=1}^N; x_a, z_a) = \sum_{n=1}^N \left[ a_n^{i} \phi_n^{i} ( \exp( z_{ai}'m ) ) - a_n^{j} \phi_n^{j} ( \exp( z_{aj}'m ) ) \right] + x_{aij}'b
\end{align}
Then the objective to be maximized becomes
\begin{align*}
GQ_{\mathcal{A}_{ij}} & \left(b,m,\{ a_n^{i}, a_n^{j} \}_{n=1}^N \right) = {\abs{\mathcal{A}_{ij}} \choose 2}^{-1} \sum_{a \neq \tilde{a} \in \mathcal{A}_{ij}} \\
\bigg[ & \mathds{1}\{ h \left( b,m,\{ a_n^{i}, a_n^{j} \}_{n=1}^N; x_a, z_a \right) > h \left( b,m,\{ a_n^{i}, a_n^{j} \}_{n=1}^N; x_{\tilde{a}}, z_{\tilde{a}} \right) \} \mathds{1}\{ S_{aij} > S_{\tilde{a}ij} \} \\
+ &  \mathds{1}\{ h \left( b,m,\{ a_n^{i}, a_n^{j} \}_{n=1}^N; x_a, z_a \right) < h \left( b,m,\{ a_n^{i}, a_n^{j} \}_{n=1}^N; x_{\tilde{a}}, z_{\tilde{a}} \right) \} \mathds{1}\{ S_{aij} < S_{\tilde{a}ij} \} \bigg]
\end{align*}

The estimates $\{ \hat{a}_n^{i}, \hat{a}_n^{j} \}_{n=1}^N$ can then be used to test whether a candidate match value distribution is close to the true distribution. More specifically, for a product $j$, we can proceed in the following steps:
\begin{enumerate}
\item Propose a candidate match value distribution $\tilde{f}_j$.
\item Compute the $\tilde{G}^{-1}_{j}(v)$ associated with the proposed distribution from the definition:
\begin{align*}
\tilde{G}_{j}(w) = \int_w^{\infty} (\epsilon - w) \tilde{f}_{j}(\epsilon) d \epsilon
\end{align*}
\item Compute the estimated $\hat{G}^{-1}_{j} (v) =  \sum_{n=1}^N \hat{a}^{j}_n \phi^{j}_n (v)$. 
\item Reject $\tilde{f}_j$ if the error $\tilde{G}^{-1}_{j} (v) - \hat{G}^{-1}_{j}(v)$ is distributed very differently from zero.
\end{enumerate}
The logical argument of this testing procedure is quite straightforward. Let $G^{-1}_{j}$ denote the function associated with the true distribution $f_j$. From equation \eqref{eq:G}, we know that if $\tilde{f}_j = f_j$, then we must have that the associated $\tilde{G}^{-1}_j$ and $G^{-1}_j$ should also be the same. The logical equivalence of this argument is: if $\tilde{G}^{-1}_j$ and $G^{-1}_j$ are not the same, then $\tilde{f}_j$ and $f_j$ are also not the same. To understand the error better, we can write
\begin{align*}
\text{error} = \tilde{G}^{-1}_{j} (v) - \hat{G}^{-1}_{j}(v) =  \underbrace{ \left( \tilde{G}^{-1}_{j}(v) - G^{-1}_{j} (v) \right)}_{\text{misspecification error}} + \underbrace{ \left( G^{-1}_{j} (v) - \hat{G}^{-1}_{j} (v) \right) }_{\text{approximation error}} 
\end{align*} 
The error in the sample to be tested on is the sum of the misspecification error and the approximation error. Note that the approximation error is the same for different proposed distributions $\tilde{f}_j$ because the approximation error only depends on the true distribution $f_j$. Therefore, if we have several candidate distributions, the distribution with the smallest sample error also has the smallest misspecification error. This process can serve as a robustness check on the assumptions of match value distribution made in most empirical applications. 

\subsection{If Search Cost Variables are Included in Utility} \label{sec:zutil}

In the main specification of the model, we assume that the consumer's search cost variables are excluded from utility. In this section, we allow some or all of the search cost variables to enter the consumer's utility directly. More specifically, the utility is
\begin{align*}
u_{aj} = x_{aj}' \beta +  z_{aj}^{u'} \beta_z + \xi_j + \nu_a + \eta_{aj} + \epsilon_{aj}  
\end{align*}
where $z_{aj}^{u}$ is a subset of the search cost variables $z_{aj}$ (they can also be the same). Search cost is still modeled as in equation \eqref{eq:cparam}. Then we can write the reservation utility equation as
\begin{align*} 
 r_{aj} = G^{-1}_{j} \left(\exp \left( z_{aj}' \gamma \right) \right) + x_{aj}' \beta +  z_{aj}^{u'} \beta_z + \xi_j +\nu_a + \eta_{aj}
\end{align*}	
The outcome variable $S_{aij}$ is defined the same as in equation \eqref{eq:Sdef}. The estimator in this case can be constructed similarly as in Section \ref{sec:nonparam} by nonparametrically estimating the $G^{-1}_{j}$ function. First, define
\begin{align*}
hz (b,b_z,m, \{ a_n^{i}, a_n^{j} \}_{n=1}^N; x_a, z_a) = \sum_{n=1}^N \left[ a_n^{i} \phi_n^{i} ( \exp( z_{ai}'m ) ) - a_n^{j} \phi_n^{j} ( \exp( z_{aj}'m ) ) \right] + x_{aij}'b + z_{aij}^{u'} b_z
\end{align*}
where $z_{aij}^{u} = z_{ai}^{u} - z_{aj}^{u}$. The objective to be maximized is therefore
\begin{align*}
ZQ_{\mathcal{A}_{ij}} & \left(b,b_z,m,\{ a_n^{i}, a_n^{j} \}_{n=1}^N \right) = {\abs{\mathcal{A}_{ij}} \choose 2}^{-1} \sum_{a \neq \tilde{a} \in \mathcal{A}_{ij}} \\
\bigg[ & \mathds{1}\{ hz \left( b,b_z,m,\{ a_n^{i}, a_n^{j} \}_{n=1}^N; x_a, z_a \right) > hz \left( b,b_z,m,\{ a_n^{i}, a_n^{j} \}_{n=1}^N; x_{\tilde{a}}, z_{\tilde{a}} \right) \} \mathds{1}\{ S_{aij} > S_{\tilde{a}ij} \} \\
+ &  \mathds{1}\{ hz \left( b,b_z,m,\{ a_n^{i}, a_n^{j} \}_{n=1}^N; x_a, z_a \right) < hz \left( b,b_z,m,\{ a_n^{i}, a_n^{j} \}_{n=1}^N; x_{\tilde{a}}, z_{\tilde{a}} \right) \} \mathds{1}\{ S_{aij} < S_{\tilde{a}ij} \} \bigg]
\end{align*}
Intuitive, $\beta_z$ is separately identified from $\gamma$ because it enters linearly into the outcome equation. 

\subsection{Estimate Preference for Consumer-Invariant Characteristics} \label{sec:invar}

This section extends the baseline model to incorporate observed characteristics that are same for all consumers or search impressions. To be more specific, let utility be 
\begin{align*}
u_{aj} = x_{aj}' \beta +  \bar{x}_{j}' \beta_{\bar{x}} + \xi_j + \nu_a + \eta_{aj} + \epsilon_{aj}  
\end{align*}
where $\bar{x}_j$ represents some observed characteristics of product $j$ that are consumer invariant. The consumer's search cost is still modeled as in equation \eqref{eq:cparam}. Therefore, the reservation utility can now be written as
\begin{align*} 
 r_{aj} = G^{-1}_{j} \left(\exp \left( z_{aj}' \gamma \right) \right) + x_{aj}' \beta + \bar{x}_{j}' \beta_{\bar{x}} + \xi_j +\nu_a + \eta_{aj}
\end{align*}	
The outcome variable $S_{aij}$ is defined the same as in equation \eqref{eq:Sdef}. Recall that the endogeneity issue arises from the fact that the unobserved product quality $\xi_j$ can be correlated with observed characteristics $x_{aj}$ and search costs variables $z_{aj}$. We write  $x_{aj} = \left( x_{aj}^e, \,  x_{aj}^{-e} \right)$ and $z_{aj} = \left( z_{aj}^e, \,  z_{aj}^{-e} \right)$, where $(x_{aj}^e, z_{aj}^e)$ are the endogenous components and $(x_{aj}^{-e}, z_{aj}^{-e})$ are the exogenous components.

To estimate such a model, we propose a two-step estimator. In the first step, we use the estimator described in Section \ref{sec:nonparam} to get consistent estimates of the preference for consumer-varying characteristics $\beta$ and the search cost parameter $\gamma$, as well as the nonparametric approximation of the $G^{-1}$ function. Denote these estimates by $\hat{\theta} = ( \hat{\beta}, \hat{\gamma}, \{ \hat{\alpha}_n^{i}, \hat{\alpha}_n^{j} \}_{n=1}^N )$. Now the only parameter left to be estimated is $\beta_{\bar{x}}$. Because $\bar{x}_j$ does not vary across consumers, it can not be identified by comparing outcome across consumers. However, we can the exploit variation within consumer to estimate $\beta_{\bar{x}}$. In order to do that, we will impose an additional assumption.
\begin{assumption} \label{as:endvar}
The researcher observes consumer $a$'s outcome variable $S_{aij}$ for any product pair $(i,j) \in \mathcal{J}_{a}$. If $x_{ai}^e = x_{aj}^e$ and $z_{ai}^e = z_{aj}^e$, then $\xi_i$ and $\xi_j$ are iid. 
\end{assumption}	
Thus, $\mathcal{J}_a$ is the set of product pairs whose outcome variables are observed by the researcher. For example, if the researcher has data on the consideration set of consumer $a$, then we have $\mathcal{J}_a = \{ (i,j) \vert a \text{ searches } i \text{ but not } j \}$. This is because the reservation utility of any searched product must be higher than that of any unsearched product. In this case, if the total number of products is large enough, then $\abs{\mathcal{J}_a}$ will be much larger since we can take all possible combinations of any searched and unsearched product. 

The second step of the estimation involves maximizing the following objective
\begin{align*}
& XQ_{\mathcal{J}_a}  \left( b_{\bar{x}} ; \, \hat{\theta} \right) = \abs{\mathcal{J}_{a}}^{-1} \sum_{ (i, j) \in \mathcal{J}_{a}} \mathds{1} \big \{ x_{ai}^e = x_{aj}^e, \, z_{ai}^e = z_{aj}^e  \big \} \\
& \bigg[ S_{aij} \,  \mathds{1} \big \{h ( \hat{\theta} ; x_a, z_a ) + \left( \bar{x}_{i}' - \bar{x}_{j}' \right) b_{\bar{x}} > 0  \} 
+ ( 1-S_{aij} ) \,  \mathds{1} \big \{h ( \hat{\theta}; x_a, z_a ) + \left( \bar{x}_{i}' - \bar{x}_{j}' \right) b_{\bar{x}} < 0  \} \bigg]
\end{align*}
where $h$ is defined as in equation \eqref{eq:h} in Section \ref{sec:nonparam}. The intuition is simple: if product $i$'s non-error part of the reservation utility is larger than that of product $j$, then $S_{aij}$ is likely to be 1 in expectation; if smaller, then $S_{aij}$ is likely to be 0. This is because the error parts are distributed iid for both products conditional on endogenous components being equal by Assumption \ref{as:endvar} and the assumption that the taste shock $\eta_{aij}$ is also iid. Moreover, the smoothing techniques described in Section \ref{sec:smooth} are also applicable here to deal with potentially continuously distributed endogenous components. Additionally, we can sum the objective above across consumers to increase efficiency.

\subsection{Heterogeneous Preference and Search Cost} \label{sec:hetero}

In this section, we discuss how to incorporate heterogeneous preference and search cost parameters into the baseline model. The utility and search cost are now modeled as:
\begin{align*}
u_{aj} &= x_{aj}' \beta_a + \xi_j + \nu_a + \eta_{aj} + \epsilon_{aj}  \\
c_{aj} &= \exp \left( z_{aj}' \gamma_a \right)
\end{align*}
Now $\beta_a$ and $\gamma_a$ are consumer specific preference and search cost parameters to be estimated. If we have enough computation power, then we can use the same estimator as in Section \ref{sec:nonparam} to estimate the preference and search cost parameters for each consumer. We would then be able to recover consumer level estimates rather than a distribution of parameters. However, we might lack the computational resources to estimates these parameters jointly. Now suppose the researcher also has some data on the consumers' demographics that could affect their preference and search costs. Let $\beta_a = \bar{\beta} + d_a^{x'} \beta^{d}$ and $\gamma_a = \bar{\gamma} + d_a^{z'} \gamma^{d}$, where $d_a^x$ and $d_a^z$ are demographic variables of consumer $a$. Define the following
\begin{align*} 
h_{a} (\bar{b}, b^d, & \bar{m}, m^d, \{ a_n^{i}, a_n^{j} \}_{n=1}^N) = \\ 
& \sum_{n=1}^N \left[ a_n^{i} \phi_n^{i} ( \exp( z_{ai}' (\bar{m} + d_a^{z'} m^d) ) ) - a_n^{j} \phi_n^{j} ( \exp( z_{aj}' (\bar{m} + d_a^{z'} m^d) ) ) \right] + x_{aij}' (\bar{b} + d_a^{x'} b^d)
\end{align*}
The estimator in this case maximizes the following objective
\begin{align*}
DQ_{\mathcal{A}_{ij}} & \left(\bar{b}, b^d, \bar{m}, m^d,\{ a_n^{i}, a_n^{j} \}_{n=1}^N \right) = {\abs{\mathcal{A}_{ij}} \choose 2}^{-1} \sum_{a \neq \tilde{a} \in \mathcal{A}_{ij}} \\
\bigg[ & \mathds{1}\{ h_{a} ( \bar{b}, b^d, \bar{m}, m^d, \{ a_n^{i}, a_n^{j} \}_{n=1}^N ) > h_{\tilde{a}} (  \bar{b}, b^d, \bar{m}, m^d, \{ a_n^{i}, a_n^{j} \}_{n=1}^N ) \} \mathds{1}\{ S_{aij} > S_{\tilde{a}ij} \} \\
+ & \mathds{1}\{ h_{a} ( \bar{b}, b^d, \bar{m}, m^d, \{ a_n^{i}, a_n^{j} \}_{n=1}^N ) < h_{\tilde{a}} (  \bar{b}, b^d, \bar{m}, m^d, \{ a_n^{i}, a_n^{j} \}_{n=1}^N ) \} \mathds{1}\{ S_{aij} < S_{\tilde{a}ij} \} \bigg]
\end{align*}
More generally, we can allow for $\beta_a = \beta(d_a^x)$ and $\gamma_a = \gamma (d_a^z)$ where $\beta(.)$ and $\gamma(.)$ are arbitrary functions to be estimated.

\section{Conclusion}

This paper presents a sequential search model that allows for endogenous unobserved product quality and researchers' misspecification of match value distribution. I propose a new pairwise maximum rank estimator that consistently estimates both consumers' preference and search cost parameters in the more general model. The PMR estimator is easily implementable and flexible in the sense that it can be adapted to various empirical scenarios. 

The main message is that the PMR estimator enables us to apply sequential search models to a much wider range of interesting and important empirical contexts which we weren't able to due to concerns of endogeneity, misspecification of match value distribution, or data and computational limitation. For example, with the PMR estimator, we can now study consumers' offline choices where it is difficult to observe their full consideration sets; school or housing choice (and other economically important contexts) where unobserved product quality most likely exists and causes endogeneity. We can also test for consumers' actual beliefs about the distribution of their match value, which can serve as a robustness check for assumptions of match value distributions in empirical applications. The PMR estimator can only be applied in sequential search models. So exploring how it can be adapted to simultaneous search models is a good avenue for future research. Another interesting question is to derive more theoretical results on inference methods of the PMR estimator.

\clearpage
\singlespace
\bibliography{ref}
\bibliographystyle{chicago}
\clearpage

\begin{appendices}
\normalsize

\section{Proof of Lemma \ref{lem:inv}} \label{app:lemma1pf}

Recall that $G_j : \left[ \underline{w}, \overline{w} \right] \rightarrow C$ where $C \subseteq \mathbb{R}$ is the image.
\begin{align*} 
G_j(w) = \int_{w}^{\infty} ( \epsilon-w) f_{j}(\epsilon) d \epsilon
\end{align*}
First, we want to show $G_j$ is strictly decreasing. Take any $w_l < w_h \in  \left[ \underline{w}, \overline{w} \right]$, define
\begin{align*} 
\tilde{G}_j(w_l, w_h) = \int_{w_h}^{\infty} ( \epsilon - w_l ) f_{j}(\epsilon) d \epsilon
\end{align*}
In order to show $G_j$ is strictly decreasing, we just need to show $G_j(w_l) > G_j(w_h)$. First we show $G_j(w_l) > \tilde{G}_j(w_l, w_h)$.
\begin{align*}
G_j(w_l) &= \int_{w_l}^{\infty} ( \epsilon - w_l ) f_{j}(\epsilon) d \epsilon \\
&= \int_{w_l}^{w_h} ( \epsilon - w_l ) f_{j}(\epsilon) d \epsilon + \int_{w_h}^{\infty} ( \epsilon - w_l ) f_{j}(\epsilon) d \epsilon \\
&= \int_{w_l}^{w_h} ( \epsilon - w_l ) f_{j}(\epsilon) d \epsilon + \tilde{G}_j(w_l, w_h) 
\end{align*}
Since $f_{j}$ is strictly positive on $\left[ \underline{w}, \overline{w} \right]$ and $\epsilon - w_l$ is strictly positive for any $\epsilon \in (w_l, w_h]$, we know that $ \int_{w_l}^{w_h} ( \epsilon - w_l ) f_{j}(\epsilon) d \epsilon$ is also strictly positive. Thus, we have $G_j(w_l) > \tilde{G}_j(w_l, w_h)$.

Now we want to show $\tilde{G}_j(w_l, w_h) \geq G_j(w_h)$. 
\begin{align*}
\tilde{G}_j(w_l, w_h) &= \int_{w_h}^{\infty} ( \epsilon - w_l ) f_{j}(\epsilon) d \epsilon \\
& \geq \int_{w_h}^{\infty} ( \epsilon - w_h ) f_{j}(\epsilon) d \epsilon \\
&=  G_j(w_h)
\end{align*}
The inequality holds because $ \epsilon - w_l >   \epsilon - w_h \geq 0$ for any $ \epsilon \geq w_h > w_l$. Thus, we have $G_j(w_l) > \tilde{G}_j(w_l, w_h) \geq G_j(w_h)$.

Because $G_j$ is strictly decreasing, it is bijective and therefore invertible. Next, we want to show $G_j^{-1}$ is also strictly decreasing. For any $c_l < c_h \in C$, there exist $w_l$ and $w_h$ such that $G_j(w_l)=c_l$ and $G_j(w_h)=c_h$ because $G_j$ is bijective. Suppose by way of contradiction that $w_l \leq w_h$, then we have $G_j(w_l) \geq G_j(w_h)$ because $G_j$ is strictly decreasing. This implies that $c_l \geq c_h$, which is a contradiction. Thus, we must have that $w_l > w_h$, so $G_j^{-1}$ is strictly decreasing.

\section{Proof of Theorem \ref{thm:nonsmooth}} \label{app:thm1pf}

Consider the following probability limits of the objective of the ideal estimator:
\begin{align*} 
Q_1(b)  = \mathbb{E}  \bigg[ 
	& \mathds{1}\{z_{ai} = z_{\tilde{a}i}, z_{aj} = z_{\tilde{a}j} \} \mathds{1}\{x_{aij}'b > x_{\tilde{a}ij}'b\} \, \mathbb{E} \left[  \mathds{1}\{S_{aij} > S_{\tilde{a}ij}\} \vert x,z \right] \\
 + & \mathds{1}\{z_{ai} = z_{\tilde{a}i}, z_{aj} = z_{\tilde{a}j} \}  \mathds{1}\{x_{aij}'b < x_{\tilde{a}ij}'b\} \, \mathbb{E} \left[  \mathds{1}\{S_{aij} < S_{\tilde{a}ij}\} \vert x, z \right]   \bigg] \\
 Q_2(m)  = \mathbb{E}  \bigg[ 
	&  \mathds{1}\{z_{ai} = z_{\tilde{a}i}, x_{aij} = x_{\tilde{a}ij} \} \mathds{1}\{z_{aj}'m > z_{\tilde{a}j}'m\} \, \mathbb{E} \left[  \mathds{1}\{S_{aij} > S_{\tilde{a}ij}\} \vert x,z \right] \\
 + &  \mathds{1}\{z_{ai} = z_{\tilde{a}i}, x_{aij} = x_{\tilde{a}ij} \} \mathds{1}\{z_{aj}'m < z_{\tilde{a}j}'m\}  \, \mathbb{E} \left[  \mathds{1}\{S_{aij} < S_{\tilde{a}ij}\} \vert x, z \right]   \bigg] \\
  Q_3(m)  = \mathbb{E}  \bigg[ 
	&  \mathds{1}\{z_{aj} = z_{\tilde{a}j}, x_{aij} = x_{\tilde{a}ij} \} \mathds{1}\{z_{ai}'m < z_{\tilde{a}i}'m\} \, \mathbb{E} \left[  \mathds{1}\{S_{aij} > S_{\tilde{a}ij}\} \vert x,z \right] \\
 + &  \mathds{1}\{z_{aj} = z_{\tilde{a}j}, x_{aij} = x_{\tilde{a}ij} \} \mathds{1}\{z_{ai}'m > z_{\tilde{a}i}'m\}  \, \mathbb{E} \left[  \mathds{1}\{S_{aij} < S_{\tilde{a}ij}\} \vert x, z \right]   \bigg] \\
Q(b,m) = \quad & Q_1(b) + Q_2 (m) +Q_3 (m)	
\end{align*}

To show strong consistency, we need to show the following conditions hold: \\
(i) $Q(b,m)$ is uniquely maximized at $(\beta^{*}, \gamma^{*})$. \\
(ii) $\Theta_{\rho}$ is compact. \\
(iii) $Q(b,m)$ is continuous. \\
(iv) $Q_{ \mathcal{A}_{ij}} (b,m)$ converges uniformly almost surely to $Q(b,m)$ \\ 
(i.e. $\sup_{(b,m) \in \Theta_{\rho}} \abs{Q_{ \mathcal{A}_{ij}} (b,m) - Q(b,m)} \overset{a.s.}{\to} 0$) 

\subsection*{Prove that condition (i) holds}

First we want to show $Q(b,m)$ is maximized at  $(\beta^{*}, \gamma^{*})$. To show maximization, we make the following Lemma \ref{lem:maxb} and \ref{lem:maxm}.

\begin{lemma} \label{lem:maxb}
For any $x_{aij}, x_{\tilde{a}ij}, z_{ai}, z_{\tilde{a}i}, z_{aj}, z_{\tilde{a}j}$, \\
if $x_{aij}' \beta^{*} > x_{\tilde{a}ij}' \beta^{*},  z_{ai} = z_{\tilde{a}i}, z_{aj} = z_{\tilde{a}j}$, then $\mathbb{E} \left[  \mathds{1}\{S_{aij} > S_{\tilde{a}ij}\} \vert x,z \right] > \mathbb{E} \left[  \mathds{1}\{S_{\tilde{a}ij} > S_{aij}\} \vert x,z \right]$; \\
if $x_{aij}' \beta^{*} < x_{\tilde{a}ij}' \beta^{*},  z_{ai} = z_{\tilde{a}i}, z_{aj} = z_{\tilde{a}j}$, then $\mathbb{E} \left[  \mathds{1}\{S_{aij} > S_{\tilde{a}ij}\} \vert x,z \right] < \mathbb{E} \left[  \mathds{1}\{S_{\tilde{a}ij} > S_{aij}\} \vert x,z \right]$.
\end{lemma}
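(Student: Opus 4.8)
The plan is to reduce the two conditional probabilities to independent threshold-crossing events and to show that, under the stated matching of the $z$'s, the only asymmetry between consumers $a$ and $\tilde a$ enters through $x_{aij}'\beta$ versus $x_{\tilde a ij}'\beta$. Writing the reservation-utility difference from \eqref{eq:S}, for consumer $a$ we have $r_{ai}-r_{aj} = G_i^{-1}(\exp(z_{ai}'\gamma)) - G_j^{-1}(\exp(z_{aj}'\gamma)) + x_{aij}'\beta + (\xi_i-\xi_j) + (\eta_{ai}-\eta_{aj})$, and analogously for $\tilde a$. Conditioning on $z_{ai}=z_{\tilde a i}$ and $z_{aj}=z_{\tilde a j}$, the search-cost terms coincide across the two consumers, and because $\xi_i-\xi_j$ is a product-pair quantity that does not depend on the consumer it is common as well. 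I would therefore collect everything shared into a single constant $C \equiv G_i^{-1}(\exp(z_{ai}'\gamma)) - G_j^{-1}(\exp(z_{aj}'\gamma)) + \xi_i-\xi_j$ and write $S_{aij} = \mathds{1}\{\mu_a + \eta_{ai}-\eta_{aj} > 0\}$ with $\mu_a \equiv C + x_{aij}'\beta$, and likewise $S_{\tilde a ij} = \mathds{1}\{\mu_{\tilde a} + \eta_{\tilde a i}-\eta_{\tilde a j} > 0\}$ with $\mu_{\tilde a} \equiv C + x_{\tilde a ij}'\beta$. Note that the match value $\epsilon$ never enters these differences, since it affects only realized utility and not the reservation-utility ordering.

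Next I would pin down the law of the taste-shock differences. Set $\zeta_a \equiv \eta_{ai}-\eta_{aj}$ and $\zeta_{\tilde a} \equiv \eta_{\tilde a i}-\eta_{\tilde a j}$. Because $\eta$ is iid across consumers and products with everywhere-positive density (Section \ref{sec:empiricutil}), $\zeta_a$ and $\zeta_{\tilde a}$ are independent and share a common CDF $H$ whose density, being a convolution of two everywhere-positive densities, is itself everywhere positive; hence $H$ is strictly increasing on $\mathbb{R}$. Conditioning on $(x,z)$ renders $C$, $\mu_a$ and $\mu_{\tilde a}$ deterministic, so $\mathbb{E}[\mathds{1}\{S_{aij}=1\}\,\vert\, x,z] = 1 - H(-\mu_a)$ and similarly for $\tilde a$.

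The key step is then a direct computation exploiting conditional independence across consumers: $\mathbb{E}[\mathds{1}\{S_{aij}>S_{\tilde a ij}\}\,\vert\, x,z] = [1-H(-\mu_a)]\,H(-\mu_{\tilde a})$ and $\mathbb{E}[\mathds{1}\{S_{\tilde a ij}>S_{aij}\}\,\vert\, x,z] = H(-\mu_a)\,[1-H(-\mu_{\tilde a})]$. Subtracting, the cross terms cancel and the difference collapses to $H(-\mu_{\tilde a}) - H(-\mu_a)$. Since $\mu_a - \mu_{\tilde a} = x_{aij}'\beta - x_{\tilde a ij}'\beta$ and $\beta^{*}=\beta/\norm{\beta}$ is a positive rescaling of $\beta$, the hypothesis $x_{aij}'\beta^{*} > x_{\tilde a ij}'\beta^{*}$ gives $\mu_a > \mu_{\tilde a}$, hence $-\mu_a < -\mu_{\tilde a}$, and strict monotonicity of $H$ yields $H(-\mu_{\tilde a}) - H(-\mu_a) > 0$, which is the first claim; the case $x_{aij}'\beta^{*} < x_{\tilde a ij}'\beta^{*}$ is the mirror image and delivers the second.

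I expect the only real obstacles to be two bookkeeping points rather than anything deep: first, justifying that the common term $C$ genuinely differences out — this is where the consumer-invariant nature of $\xi$ and the exact matching of the $z$'s are essential — and second, arguing that $H$ is strictly increasing, which I would obtain from the maintained assumption that $\eta$ is iid with everywhere-positive density. Everything else is the elementary cancellation above, so no heavy calculation is required.
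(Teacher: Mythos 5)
Your proof is correct, but it takes a genuinely different route from the paper's. You compute the two conditional probabilities in closed form: by conditional independence across consumers, $\mathbb{E}[\mathds{1}\{S_{aij}>S_{\tilde a ij}\}\mid x,z,\xi]=[1-H(-\mu_a)]H(-\mu_{\tilde a})$ and symmetrically for the other term, so the cross terms cancel and everything reduces to $H(-\mu_{\tilde a})-H(-\mu_a)>0$, with strictness coming from the strict monotonicity of $H$ (the CDF of $\eta_{ai}-\eta_{aj}$, whose density is a convolution of everywhere-positive densities). The paper instead never writes the probabilities as products: it uses an exchangeability step (swapping $\eta_{aij}$ and $\eta_{\tilde a ij}$, which is valid because $\eta$ is iid) to move $x_{aij}'\beta$ into the $\tilde a$ indicator, and then establishes the strict inequality by exhibiting explicit intervals $\Omega_a,\Omega_{\tilde a}$ for the taste-shock differences on which the two indicator comparisons differ, these events having positive probability by the everywhere-positive density of $\eta$. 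Both arguments lean on exactly the same two ingredients you flag --- the differencing-out of the common term (search-cost terms and $\xi_i-\xi_j$) and the iid, full-support taste shocks --- but yours is the more elementary and transparent calculation, while the paper's coupling-style argument avoids introducing the distribution $H$ and generalizes more directly to settings where the closed-form product factorization is unavailable. One small imprecision to fix: conditioning on $(x,z)$ does not make $C$ deterministic, since $\xi_{ij}$ is not a function of $(x,z)$; you should (as the paper does) condition additionally on $\xi$, run your argument for each fixed value of $\xi_{ij}$, and then integrate, noting that $\eta$ is independent of $\xi$ so the law of $\zeta_a,\zeta_{\tilde a}$ is unaffected. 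With that one-line repair the proof is complete.
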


\begin{proof}
First notice that $x_{aij}' \beta^{*} > x_{\tilde{a}ij}' \beta^{*}$ is equivalent to $x_{aij}' \beta > x_{\tilde{a}ij}' \beta$. Let $h \left( z_{ai}, z_{aj}, \gamma \right) = G_{i}^{-1} ( \exp (z_{ai}' \gamma )) - G_{j}^{-1} ( \exp (z_{aj}' \gamma ))$. Denote $\xi_{ij} = \xi_i - \xi_j$, $\eta_{aij} = \eta_{ai} - \eta_{aj}$ and $\eta_{\tilde{a}ij} = \eta_{\tilde{a}i} - \eta_{\tilde{a}j}$. Given any $x_{aij}' \beta > x_{\tilde{a}ij}' \beta, z_{ai} = z_{\tilde{a}i} = z_i, z_{aj} = z_{\tilde{a}j} = z_j, \xi_{ij}$, we have
\begin{align*}
&  \mathbb{E} \left[  \mathds{1}\{S_{aij} > S_{\tilde{a}ij} \} \vert x,z, \xi \right]  \\
= & \mathbb{E} \left[ \mathds{1}\{ \mathds{1}\{ h \left( z_{ai}, z_{aj}, \gamma \right) + x_{aij}' \beta + \xi_{ij} + \color{red} \eta_{aij} \color{black} >0 \}    >  \mathds{1}\{ h \left( z_{\tilde{a}i}, z_{\tilde{a}j}, \gamma \right)  + x_{\tilde{a}ij}' \beta + \xi_{ij} + \color{red} \eta_{\tilde{a}ij} \color{black}  >0 \}  \} \vert x,z, \xi \right] \\
 = & \mathbb{E} \left[  \mathds{1}\{ \mathds{1}\{ h \left( z_{\tilde{a}i}, z_{\tilde{a}j}, \gamma \right) + \color{blue} x_{aij}' \beta \color{black} + \xi_{ij} + \color{red} \eta_{\tilde{a}ij} \color{black}  >0 \} >  \mathds{1}\{h \left( z_{ai}, z_{aj}, \gamma \right) + x_{\tilde{a}ij}' \beta + \xi_{ij} +  \color{red} \eta_{aij} \color{black}   >0 \} \} \vert x,z, \xi \right] \\ 
>  & \mathbb{E} \left[  \mathds{1}\{ \mathds{1}\{h \left( z_{\tilde{a}i}, z_{\tilde{a}j}, \gamma \right) + \color{blue} x_{\tilde{a}ij}' \beta \color{black}+ \xi_{ij} + \eta_{\tilde{a}ij}  >0 \} >  \mathds{1}\{ h \left( z_{ai}, z_{aj}, \gamma \right) + \color{orange} x_{\tilde{a}ij}' \beta \color{black} + \xi_{ij} + \eta_{aij}  >0 \} \} \vert x,z, \xi \right] \\ 
\geq & \mathbb{E} \left[  \mathds{1}\{ \mathds{1}\{ h \left( z_{\tilde{a}i}, z_{\tilde{a}j}, \gamma \right) + x_{\tilde{a}ij}' \beta + \xi_{ij} + \eta_{\tilde{a}ij}  >0 \} >  \mathds{1}\{ h \left( z_{ai}, z_{aj}, \gamma \right) + \color{orange} x_{aij}' \beta \color{black} + \xi_{ij} + \eta_{aij}  >0 \} \} \vert x,z, \xi \right] \\ 
= & \mathbb{E} \left[  \mathds{1}\{S_{\tilde{a}ij} > S_{aij} \} \vert x,z, \xi \right]
\end{align*}

The first equality comes from the definition of $S$ as in equation \eqref{eq:S}. The second equality holds because $ \eta_{aj}$ is iid.

To show the first inequality holds, let $\Omega_a = \left( -\infty, - h \left( z_{i}, z_{j}, \gamma \right) - x_{\tilde{a}ij}' \beta - \xi_{ij} \right] \subseteq \mathbb{R}$ and \\ $\Omega_{\tilde{a}} = \left( - h \left( z_{i}, z_{j}, \gamma \right) - x_{aij}' \beta - \xi_{ij}, - h \left( z_{i}, z_{j}, \gamma \right) - x_{\tilde{a}ij}' \beta - \xi_{ij} \right] \subseteq \mathbb{R}$. Since $x_{aij}' \beta > x_{\tilde{a}ij}' \beta$, $\Omega_{\tilde{a}}$ is not empty. For any $\eta_{aij} \in \Omega_a$ and any $\eta_{\tilde{a}ij} \in \Omega_{\tilde{a}}$, we have 
\begin{eqnarray*}
&& \mathds{1}\{ h \left( z_{i}, z_{j}, \gamma \right) +  x_{\tilde{a}ij}' \beta + \xi_{ij} + \eta_{aij}  >0 \} = 0 \\
&& \mathds{1}\{ h \left( z_{i}, z_{j}, \gamma \right) + x_{\tilde{a}ij}' \beta + \xi_{ij} + \eta_{\tilde{a}ij}  >0 \} = 0 \\
&& \mathds{1}\{ h \left( z_{i}, z_{j}, \gamma \right) + x_{aij}' \beta + \xi_{ij} + \eta_{\tilde{a}ij}  >0 \} = 1 
\end{eqnarray*}
These imply that $ \forall \, \eta_{aij} \in \Omega_a$ and $ \forall \, \eta_{\tilde{a}ij} \in \Omega_{\tilde{a}}$, 
\begin{align*}
 & \mathds{1}\{ \mathds{1}\{ h \left( z_{i}, z_{j}, \gamma \right) +  x_{aij}' \beta + \xi_{ij} +  \eta_{\tilde{a}ij}  >0 \} >  \mathds{1}\{ h \left( z_{i}, z_{j}, \gamma \right)+ x_{\tilde{a}ij}' \beta + \xi_{ij} +  \eta_{aij}   >0 \} \\
> & \mathds{1}\{ \mathds{1}\{ h \left( z_{i}, z_{j}, \gamma \right)+ x_{\tilde{a}ij}' \beta + \xi_{ij} + \eta_{\tilde{a}ij}  >0 \} >  \mathds{1}\{ h \left( z_{i}, z_{j}, \gamma \right) +  x_{\tilde{a}ij}' \beta + \xi_{ij} + \eta_{aij}  >0 \} \} 
\end{align*} 
Since $\eta_{aij}$ and $\eta_{\tilde{a}ij}$ have almost everywhere positive density on $\mathbb{R}$, the event $\Omega_a$ and $\Omega_{\tilde{a}}$ happen with positive probability. Therefore, the first inequality holds. The second inequality holds because  $x_{aij}' \beta > x_{\tilde{a}ij}' \beta$. The last equality holds from the definition of $S$. 

Since $\mathbb{E} \left[  \mathds{1}\{S_{aij} > S_{\tilde{a}ij} \} \vert x,z, \xi \right] > \mathbb{E} \left[  \mathds{1}\{S_{\tilde{a}ij} > S_{aij} \} \vert x,z, \xi \right]$ holds for all $\xi$, we have that \\ $\mathbb{E} \left[  \mathds{1}\{S_{aij} > S_{\tilde{a}ij} \} \vert x,z \right] > \mathbb{E} \left[  \mathds{1}\{S_{\tilde{a}ij} > S_{aij} \} \vert x,z \right]$. 

Similar argument will prove the case with $x_{aij}' \beta^{*} < x_{\tilde{a}ij}' \beta^{*}, z_{ai} = z_{\tilde{a}i}, z_{aj} = z_{\tilde{a}j}$.
\end{proof}

\begin{lemma} \label{lem:maxm}
For any $x_{aij}, x_{\tilde{a}ij}, z_{ai}, z_{\tilde{a}i}, z_{aj}, z_{\tilde{a}j}$, \\
if $z_{aj}' \gamma^{*} > z_{\tilde{a}j}' \gamma^{*}, x_{aij} = x_{\tilde{a}ij},  z_{ai} = z_{\tilde{a}i}$, then $\mathbb{E} \left[  \mathds{1}\{S_{aij} > S_{\tilde{a}ij}\} \vert x,z \right] > \mathbb{E} \left[  \mathds{1}\{S_{\tilde{a}ij} > S_{aij}\} \vert x,z \right]$; \\
if $z_{aj}' \gamma^{*} < z_{\tilde{a}j}' \gamma^{*}, x_{aij} = x_{\tilde{a}ij},  z_{ai} = z_{\tilde{a}i}$, then $\mathbb{E} \left[  \mathds{1}\{S_{aij} > S_{\tilde{a}ij}\} \vert x,z \right] < \mathbb{E} \left[  \mathds{1}\{S_{\tilde{a}ij} > S_{aij}\} \vert x,z \right]$. \\
if $z_{ai}' \gamma^{*} > z_{\tilde{a}i}' \gamma^{*}, x_{aij} = x_{\tilde{a}ij},  z_{aj} = z_{\tilde{a}j}$, then $\mathbb{E} \left[  \mathds{1}\{S_{aij} > S_{\tilde{a}ij}\} \vert x,z \right] < \mathbb{E} \left[  \mathds{1}\{S_{\tilde{a}ij} > S_{aij}\} \vert x,z \right]$; \\
if $z_{ai}' \gamma^{*} < z_{\tilde{a}i}' \gamma^{*}, x_{aij} = x_{\tilde{a}ij},  z_{aj} = z_{\tilde{a}j}$, then $\mathbb{E} \left[  \mathds{1}\{S_{aij} > S_{\tilde{a}ij}\} \vert x,z \right] > \mathbb{E} \left[  \mathds{1}\{S_{\tilde{a}ij} > S_{aij}\} \vert x,z \right]$. \\
\end{lemma}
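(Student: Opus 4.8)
The plan is to reduce all four cases to the single monotonicity-plus-probability argument already carried out in Lemma \ref{lem:maxb}. As in that proof, write $h(z_{ai}, z_{aj}, \gamma) = G_i^{-1}(\exp(z_{ai}'\gamma)) - G_j^{-1}(\exp(z_{aj}'\gamma))$ and set the non-stochastic index $v_a = h(z_{ai}, z_{aj}, \gamma) + x_{aij}'\beta + \xi_{ij}$, so that by equation \eqref{eq:S} we have $S_{aij} = \mathds{1}\{v_a + \eta_{aij} > 0\}$, and define $v_{\tilde{a}}$ analogously. Since $\gamma^{*} = \gamma/\norm{\gamma}$ differs from $\gamma$ only by the positive scalar $\norm{\gamma}$, each hypothesis stated in terms of $\gamma^{*}$ is equivalent to the same inequality in $\gamma$.

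The first step is to pin down the sign of $v_a - v_{\tilde{a}}$ from each hypothesis. Because $x_{aij} = x_{\tilde{a}ij}$, the preference terms cancel, and because one of the two search-cost arguments is held common across the two consumers, the corresponding $G^{-1}$ term cancels as well, so a single $G^{-1}$ term survives. Its sign follows from Lemma \ref{lem:inv}, which gives that $G_i^{-1}$ and $G_j^{-1}$ are strictly decreasing, together with the fact that $\exp(\cdot)$ is strictly increasing. For case (1) only $-G_j^{-1}(\exp(\cdot))$ survives, and $z_{aj}'\gamma > z_{\tilde{a}j}'\gamma$ forces $G_j^{-1}(\exp(z_{aj}'\gamma)) < G_j^{-1}(\exp(z_{\tilde{a}j}'\gamma))$, hence $v_a > v_{\tilde{a}}$. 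For case (3) only $+G_i^{-1}(\exp(\cdot))$ survives, and $z_{ai}'\gamma > z_{\tilde{a}i}'\gamma$ yields $v_a < v_{\tilde{a}}$; the sign flips relative to case (1) precisely because $G_i^{-1}$ enters $h$ with the opposite sign to $G_j^{-1}$. Cases (2) and (4) are the mirror images obtained by reversing the inequality.

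The second step converts the sign of $v_a - v_{\tilde{a}}$ into the stated ordering of the conditional expectations, exactly as in Lemma \ref{lem:maxb}. I would condition on $\xi$, which enters both $v_a$ and $v_{\tilde{a}}$ only through the common term $\xi_{ij}$ and therefore cancels in the difference, leaving the sign found above unaffected. Since $\eta_{aij}$ and $\eta_{\tilde{a}ij}$ are independent across consumers with common, everywhere-positive density, writing $F$ for their shared CDF gives
\[
\mathbb{E}\left[\mathds{1}\{S_{aij} > S_{\tilde{a}ij}\} \mid x,z,\xi\right] - \mathbb{E}\left[\mathds{1}\{S_{\tilde{a}ij} > S_{aij}\} \mid x,z,\xi\right] = F(-v_{\tilde{a}}) - F(-v_a),
\]
which is strictly positive when $v_a > v_{\tilde{a}}$ and strictly negative when $v_a < v_{\tilde{a}}$, since $F$ is strictly increasing. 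As this holds for every value of $\xi$, integrating over $\xi$ preserves the strict inequality and delivers the four claims. Alternatively, one can transcribe the $\Omega_a, \Omega_{\tilde{a}}$ construction of Lemma \ref{lem:maxb} verbatim.

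I expect no serious obstacle, since the probabilistic heart of the argument is identical to Lemma \ref{lem:maxb}; the only genuine content is the sign bookkeeping in the first step, namely correctly tracking which single $G^{-1}$ term survives and whether it enters $h$ with a $+$ or a $-$. The one point that needs minor care is that each argument $\exp(z'\gamma)$ lies in the domain of the relevant $G^{-1}$ (the image of $G_j$, respectively $G_i$) so that Lemma \ref{lem:inv} applies, and that composing with the strictly increasing map $\exp(\cdot)$ does not disturb the inequalities.
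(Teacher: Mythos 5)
Your proposal is correct and follows essentially the same route as the paper, which proves this lemma simply by appeal to "a similar argument as in the proof of Lemma \ref{lem:maxb}": you correctly identify that the only new content is the sign bookkeeping for the single surviving $G^{-1}$ term (via Lemma \ref{lem:inv} and the monotonicity of $\exp$), and your reduction of each of the four cases to the sign of $v_a - v_{\tilde a}$ is accurate. Your explicit identity $\mathbb{E}[\mathds{1}\{S_{aij}>S_{\tilde a ij}\}\mid x,z,\xi]-\mathbb{E}[\mathds{1}\{S_{\tilde a ij}>S_{aij}\}\mid x,z,\xi]=F(-v_{\tilde a})-F(-v_a)$ is a slightly cleaner packaging of the paper's $\Omega_a,\Omega_{\tilde a}$ event construction, but it is the same underlying monotonicity argument.
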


\begin{proof}
Similar argument as in the proof of Lemma \ref{lem:maxb}.
\end{proof}

Lemma \ref{lem:maxb} implies that $Q_1(b)$ is maximized at $\beta^{*}$. Similarly, Lemma \ref{lem:maxm} implies that $\gamma^{*}$ maximizes both $Q_2(m)$ and $Q_3 (m)$. Thus, $(\beta^{*}, \gamma^{*})$ maximizes $Q(b,m)$.  

Next we want to show that $Q(b,m)$ is maximized uniquely at $(\beta^{*}, \gamma^{*})$. We first show that $\beta^{*}$ uniquely maximizes $Q_1(b)$. 

Let $\mathcal{B}_{\rho} = \{ b: \abs{b^1} \geq \rho, \, \norm{b} = 1 \}$. For any $b \in \mathcal{B}_{\rho}$ with $b \neq \beta^{*}$, define 
\begin{align*}
X(b, \beta^{*}) = & \, \{ x_{aij}, x_{\tilde{a}ij} \vert x_{aij}' b > x_{\tilde{a}ij}' b,  x_{aij}' \beta^{*} < x_{\tilde{a}ij}' \beta^{*}, z_{ai} = z_{\tilde{a}i}, z_{ai} = z_{\tilde{a}i} \} \\
\cup & \, \{ x_{aij}, x_{\tilde{a}ij} \vert x_{aij}' b < x_{\tilde{a}ij}' b,  x_{aij}' \beta^{*} > x_{\tilde{a}ij}' \beta^{*}, z_{ai} = z_{\tilde{a}i}, z_{ai} = z_{\tilde{a}i} \} 
\end{align*} 
First consider the case with $b^1 >0$ and $\beta^{*1} > 0$. We can rewrite $X(b, \beta^{*})$ as the following:
\begin{align*}
X(b, \beta^{*}) = & \, \{ \tilde{x}_{aij}, \tilde{x}_{\tilde{a}ij} \vert \left( \tilde{x}_{aij}' - \tilde{x}_{\tilde{a}ij}' \right) \tilde{b} / b^{1} > x_{\tilde{a}ij}^1 - x_{aij}^1 >  \left( \tilde{x}_{aij}' - \tilde{x}_{\tilde{a}ij}' \right) \tilde{\beta}^{*} / \beta^{*1}, z_{ai} = z_{\tilde{a}i}, z_{ai} = z_{\tilde{a}i} \} \\
\cup & \, \{ \tilde{x}_{aij}, \tilde{x}_{\tilde{a}ij} \vert \left( \tilde{x}_{aij}' - \tilde{x}_{\tilde{a}ij}' \right) \tilde{b} / b^{1} < x_{\tilde{a}ij}^1 - x_{aij}^1 <  \left( \tilde{x}_{aij}' - \tilde{x}_{\tilde{a}ij}' \right) \tilde{\beta}^{*} / \beta^{*1}, z_{ai} = z_{\tilde{a}i}, z_{ai} = z_{\tilde{a}i} \}
\end{align*} 
We also know that $ \left( \tilde{x}_{aij}' - \tilde{x}_{\tilde{a}ij}' \right) \tilde{b} / b^{1} =  \left( \tilde{x}_{aij}' - \tilde{x}_{\tilde{a}ij}' \right) \tilde{\beta}^{*} / \beta^{*1}$ for all $ \tilde{x}_{aij}$ and $\tilde{x}_{\tilde{a}ij}$ does not happen because of assumption \ref{as:support} (a) and $\tilde{b} / b^{1} \neq \tilde{\beta}^{*} / \beta^{*1}$. The latter is true because suppose, by way of contradiction, that $\tilde{b} / b^{1} = \tilde{\beta}^{*} / \beta^{*1}$, then $b = \left( b^{1} / \beta^{*1} \right) \beta^{*}$. This implies that $\norm{b} = \abs{b^{1} / \beta^{*1}} * \norm{\beta^{*}}$. Since $\norm{b} = \norm{\beta^{*}}=1$ by assumption, then $b^{1} / \beta^{*1} = 1$. But because $b \neq \beta^{*}$, we know that $ b^{1} / \beta^{*1} \neq 1$, which is a contradiction.

Assumption \ref{as:support} (b) makes sure that $X(b, \beta^{*})$ has positive measure for any $b \in \mathcal{B}_{\rho}$ with $b \neq \beta^{*}$. Since $X(b, \beta^{*})$ has positive probability, then $b$ will make incorrect predictions with positive probability as compared to the true date generating process. Therefore, $Q_1(b)$ will be lower than $Q_1(\beta^{*})$. 

Next consider the case with $b^1 < 0$ and $\beta^{*1} > 0$. We can rewrite $X(b, \beta^{*})$ as:
\begin{align*}
& X(b, \beta^{*}) \\
= & \, \{ \tilde{x}_{aij}, \tilde{x}_{\tilde{a}ij} \vert x_{\tilde{a}ij}^1 - x_{aij}^1 >  \left( \tilde{x}_{aij}' - \tilde{x}_{\tilde{a}ij}' \right) \tilde{\beta}^{*} / \beta^{*1}, \, x_{\tilde{a}ij}^1 - x_{aij}^1 > \left( \tilde{x}_{aij}' - \tilde{x}_{\tilde{a}ij}' \right) \tilde{b} / b^{1}, \, z_{ai} = z_{\tilde{a}i}, z_{ai} = z_{\tilde{a}i} \} \\
\cup & \, \{ \tilde{x}_{aij}, \tilde{x}_{\tilde{a}ij} \vert x_{\tilde{a}ij}^1 - x_{aij}^1 <  \left( \tilde{x}_{aij}' - \tilde{x}_{\tilde{a}ij}' \right) \tilde{\beta}^{*} / \beta^{*1}, \, x_{\tilde{a}ij}^1 - x_{aij}^1 < \left( \tilde{x}_{aij}' - \tilde{x}_{\tilde{a}ij}' \right) \tilde{b} / b^{1}, \, z_{ai} = z_{\tilde{a}i}, z_{ai} = z_{\tilde{a}i} \} \\
= & \, \{ \tilde{x}_{aij}, \tilde{x}_{\tilde{a}ij} \vert x_{\tilde{a}ij}^1 - x_{aij}^1 > \max \{ \left( \tilde{x}_{aij}' - \tilde{x}_{\tilde{a}ij}' \right) \tilde{\beta}^{*} / \beta^{*1}, \, \left( \tilde{x}_{aij}' - \tilde{x}_{\tilde{a}ij}' \right) \tilde{b} / b^{1} \}, \, z_{ai} = z_{\tilde{a}i}, z_{ai} = z_{\tilde{a}i} \} \\
\cup & \, \{ \tilde{x}_{aij}, \tilde{x}_{\tilde{a}ij} \vert x_{\tilde{a}ij}^1 - x_{aij}^1 < \min \{ \left( \tilde{x}_{aij}' - \tilde{x}_{\tilde{a}ij}' \right) \tilde{\beta}^{*} / \beta^{*1}, \, \left( \tilde{x}_{aij}' - \tilde{x}_{\tilde{a}ij}' \right) \tilde{b} / b^{1} \}, \, z_{ai} = z_{\tilde{a}i}, z_{ai} = z_{\tilde{a}i} \} 
\end{align*} 
Again, assumption \ref{as:support} guarantees that $X(b,\beta)$ has positive probability, which implies that $Q_1(b)$ will be lower than $Q_1(\beta^{*})$. Similar argument works for the case with $\beta^{*1} < 0$.

We have shown that for any $b \in \mathcal{B}_{\rho}$ with $b \neq \beta^{*}$, we have $Q_1(b) < Q_1 (\beta^{*})$. This means that $\beta^{*}$ uniquely maximizes $Q_1$. Similar arguments will show that $\gamma^{*}$ maximizes both $Q_2(m)+Q_3(m)$ uniquely. Therefore, $(\beta^{*}, \gamma^{*})$ maximizes $Q(b, m)$ uniquely. 

\subsection*{Prove that conditions (ii)-(iv) hold}

Condition (ii) holds by definition. Condition (iii) follows from \cite{manski1985} Lemma 5. Condition (iv) can be proved analogously to \cite{han1987}'s proof of uniform convergence.

\section{Proof of Theorem \ref{thm:smooth}} \label{app:thm2pf}

To show strong consistency, we need to show: \\
(i) $Q(b,m)$ is uniquely maximized at $(\beta^{*}, \gamma^{*})$. \\
(ii) $\Theta_{\rho}$ is compact. \\
(iii) $Q(b,m)$ is continuous. \\
(iv) $SQ_{ \mathcal{A}_{ij}} (b,m; \sigma_{\mathcal{A}_{ij}})$ converges uniformly almost surely to $Q(b,m)$ \\ 
(i.e. $\sup_{(b,m) \in \Theta_{\rho}} \abs{SQ_{ \mathcal{A}_{ij}} (b,m; \sigma_{\mathcal{A}_{ij}}) - Q(b,m)} \overset{a.s.}{\to} 0$) 

Conditions (i) - (iii) are already proven to hold from the proof of theorem \ref{thm:nonsmooth}. So we only need to verify condition (iv).

To show that $SQ_{ \mathcal{A}_{ij}} (b,m; \sigma_{\mathcal{A}_{ij}})$ converges uniformly almost surely to $Q(b,m)$, we just need to show   that $SQ_{ \mathcal{A}_{ij}} (b,m; \sigma_{\mathcal{A}_{ij}})$ converges uniformly almost surely to $Q_{ \mathcal{A}_{ij}} (b,m; \sigma_{\mathcal{A}_{ij}})$, because we have already shown $Q_{ \mathcal{A}_{ij}} (b,m; \sigma_{\mathcal{A}_{ij}})$ converges uniformly almost surely to $Q(b,m)$ in the proof of theorem \ref{thm:nonsmooth}.

Let
\begin{align*}
SQ_{1, \mathcal{A}_{ij}} (b) = {\abs{\mathcal{A}_{ij}} \choose 2}^{-1} \sum_{a \neq \tilde{a} \in \mathcal{A}_{ij}} 
    K \bigg(  \scriptsize  \frac{  \norm{ \begin{bmatrix}
	z_{ai} \\
	z_{aj}
	\end{bmatrix} - \begin{bmatrix}
	z_{\tilde{a}i} \\
	z_{\tilde{a}j}
	\end{bmatrix} }}  {\sigma_{\mathcal{A}_{ij}}} \bigg) 
	\normalsize
	 \bigg( & \mathds{1}\{x_{aij}'b > x_{\tilde{a}ij}'b\} \mathds{1}\{S_{aij} > S_{\tilde{a}ij}\} \\
	      +   & \mathds{1}\{x_{aij}'b < x_{\tilde{a}ij}'b\} \mathds{1}\{S_{aij} < S_{\tilde{a}ij}\} \bigg) 
\end{align*}

\begin{lemma} \label{lem:conv}
$SQ_{1, \mathcal{A}_{ij}} (b; \sigma_{\mathcal{A}_{ij}})$ converges uniformly almost surely to $Q_{1, \mathcal{A}_{ij}} (b; \sigma_{\mathcal{A}_{ij}})$.
\end{lemma}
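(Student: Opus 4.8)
My plan is to read Lemma \ref{lem:conv} as a uniform strong law of large numbers: for each sample size $n = \abs{\mathcal{A}_{ij}}$, the quantity $SQ_{1,\mathcal{A}_{ij}}(b;\sigma_{\mathcal{A}_{ij}})$ is a second-order U-statistic over the consumers in $\mathcal{A}_{ij}$, and $Q_{1,\mathcal{A}_{ij}}(b;\sigma_{\mathcal{A}_{ij}})$ is its expectation, so the target is $\sup_{b \in \mathcal{B}_\rho}\abs{SQ_{1,\mathcal{A}_{ij}}(b;\sigma_{\mathcal{A}_{ij}}) - Q_{1,\mathcal{A}_{ij}}(b;\sigma_{\mathcal{A}_{ij}})} \to 0$ almost surely. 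Its symmetric kernel is $p_n(w_a,w_{\tilde a};b) = K_{\sigma_{\mathcal{A}_{ij}}}(\norm{(z_{ai},z_{aj}) - (z_{\tilde a i}, z_{\tilde a j})})$ times the two indicator brackets, and by Assumption \ref{as:K}(a) it is bounded by $M$ uniformly in $b$ and in the bandwidth, since the brackets take values in $\{0,1\}$ and $K$ is bounded. The crucial structural point is that the kernel carries no $1/\sigma$ normalization, so shrinking $\sigma_{\mathcal{A}_{ij}}$ does not inflate the variance; this makes the argument essentially the same as in the non-smoothed case handled in Theorem \ref{thm:nonsmooth}.

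First I would decompose the centered U-statistic by its Hoeffding (H\'ajek) projection, writing $SQ_{1,n} - Q_{1,n}$ as twice an i.i.d.\ average of the one-argument projection $p_{1,n}(w_a;b) = \mathbb{E}[p_n(w_a, w_{\tilde a};b)\mid w_a] - Q_{1,n}(b;\sigma_{\mathcal{A}_{ij}})$ plus a completely degenerate second-order remainder. The projection term is an ordinary sample average indexed by $b$, which I would control with a uniform strong law (Glivenko--Cantelli) over the parameter class. The degenerate remainder I would dispatch with a strong law for degenerate U-statistics with a bounded kernel, which tends to zero almost surely; uniformity in $b$ follows from the same entropy bound described next.

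The engine for the uniform part is that $b$ enters $p_n$ only through the half-space comparison $\mathds{1}\{x_{aij}'b > x_{\tilde a ij}'b\} = \mathds{1}\{(x_{aij}-x_{\tilde a ij})'b > 0\}$, and the collection of such half-spaces indexed by $b \in \mathcal{B}_\rho \subset \mathbb{R}^{q_x}$ is a Vapnik--Chervonenkis class whose index depends only on $q_x$, not on $n$ or $\sigma_{\mathcal{A}_{ij}}$. Multiplying this class by the fixed, $b$-free factor $K_{\sigma_{\mathcal{A}_{ij}}}(\cdot)\,\mathds{1}\{S_{aij}>S_{\tilde a ij}\}$ (and its mirror term) preserves polynomial covering numbers and keeps the envelope at $M$. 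Hence the covering numbers of the relevant function classes are bounded uniformly across the triangular array in $\sigma_{\mathcal{A}_{ij}}$, and the uniform strong law applies to both the projection average and the degenerate remainder. This is exactly the route of \cite{han1987} for the maximum rank correlation objective and of \cite{manski1985}, so I would invoke that machinery, noting only the extra bounded factor $K_{\sigma_{\mathcal{A}_{ij}}}$, which alters none of the entropy estimates.

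I expect the one genuine obstacle to be verifying uniformity simultaneously over the parameter $b$ and over the shrinking bandwidth, i.e.\ handling a U-process whose kernel moves with $n$. The resolution I would emphasize is that the two complications decouple: the discontinuity in $b$ is benign because half-spaces form a VC class with $n$-free index, and the moving bandwidth is benign because Assumption \ref{as:K}(a) bounds $K$ by $M$ with no normalizing factor, so the envelope and all entropy bounds are uniform in $\sigma_{\mathcal{A}_{ij}}$. Once these two facts are in place, the U-process uniform strong law delivers the claim, and the analogous arguments for $SQ_{2,\mathcal{A}_{ij}}$ and $SQ_{3,\mathcal{A}_{ij}}$ (which substitute the half-spaces $\mathds{1}\{z_{aj}'m > z_{\tilde a j}'m\}$ and $\mathds{1}\{z_{ai}'m > z_{\tilde a i}'m\}$, again VC classes in $\mathbb{R}^{q_z}$) complete the verification of condition (iv).
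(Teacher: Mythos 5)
Your proposal proves a different statement from the one the lemma actually makes. In the paper, $Q_{1,\mathcal{A}_{ij}}$ is not the expectation of $SQ_{1,\mathcal{A}_{ij}}$; it is the first component of the \emph{ideal sample objective} $Q_{\mathcal{A}_{ij}}(b,m)$ from Section \ref{sec:nonsmooth}, i.e.\ the same double sum over consumer pairs but with the exact-matching indicator $\mathds{1}\{z_{a}=z_{\tilde{a}}\}$ in place of the kernel weight. The lemma is therefore a comparison of two \emph{sample} quantities, and the paper's proof is elementary: since the bracket involving $b$ takes values in $\{0,1\}$,
\begin{align*}
\sup_{b \in \mathcal{B}_{\rho}}\abs{SQ_{1,\mathcal{A}_{ij}}(b;\sigma_{\mathcal{A}_{ij}}) - Q_{1,\mathcal{A}_{ij}}(b;\sigma_{\mathcal{A}_{ij}})} \leq {\abs{\mathcal{A}_{ij}} \choose 2}^{-1}\sum_{a \neq \tilde{a} \in \mathcal{A}_{ij}}\abs{K\left(\norm{z_a - z_{\tilde{a}}}/\sigma_{\mathcal{A}_{ij}}\right) - \mathds{1}\{z_a = z_{\tilde{a}}\}},
\end{align*}
and the right-hand side is handled by splitting the pairs according to whether $\norm{z_a - z_{\tilde{a}}} \geq \alpha$ (where $K(\cdot/\sigma_{\mathcal{A}_{ij}}) \to 0$ by Assumptions \ref{as:K} and \ref{as:bandwidth} and the matching indicator is zero) or $\norm{z_a - z_{\tilde{a}}} < \alpha$ (where boundedness of $K$ and the continuous distribution of $z_{ai}^1 - z_{\tilde{a}i}^1$ from Assumption \ref{as:support}(b) make the fraction of such pairs negligible as $\alpha \to 0$). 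No U-statistic projection or entropy argument is needed, and uniformity in $b$ is free because $b$ enters only through the bracket that was bounded away.

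Your Hoeffding-projection/VC argument addresses a different link in the chain, namely $SQ_{1,n} - \mathbb{E}[SQ_{1,n}] \to 0$. That could be an ingredient in an alternative route to condition (iv) of Theorem \ref{thm:smooth}, but on its own it does not deliver the conclusion: you would still need the bias step, that $\mathbb{E}[SQ_{1,n}(b;\sigma_n)]$ converges uniformly in $b$ to the population limit $Q_1(b)$ as the bandwidth shrinks, and your proposal never engages with this. Your observations that the half-spaces indexed by $b$ form a VC class and that the unnormalized kernel keeps the envelope at $M$ are both correct, but they control only the stochastic fluctuation, not the smoothing bias. The paper instead routes the sample-to-population step entirely through the unsmoothed objective (invoking \cite{han1987} for $Q_{\mathcal{A}_{ij}} \to Q$ in the proof of Theorem \ref{thm:nonsmooth}) and reserves Lemma \ref{lem:conv} for the purely mechanical smoothed-versus-unsmoothed comparison above.
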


\begin{proof}
From the definitions, we have
\begin{align*}
 SQ_{1, \mathcal{A}_{ij}} (b; \sigma_{\mathcal{A}_{ij}}) - Q_{1, \mathcal{A}_{ij}} (b; \sigma_{\mathcal{A}_{ij}})
  \quad \quad \quad \quad \quad \quad \quad \quad \quad \quad \quad \quad  & \\
= {\abs{\mathcal{A}_{ij}} \choose 2}^{-1} \sum_{a \neq \tilde{a} \in \mathcal{A}_{ij}} 
 \bigg[ K \bigg(  \scriptsize  \frac{  \norm{ z_a - z_{\tilde{a}}}}  {\sigma_{\mathcal{A}_{ij}}} \bigg) - \mathds{1}\{z_{a} = z_{\tilde{a}}\} \bigg] 
	 \bigg[  & \mathds{1}\{x_{aij}'b > x_{\tilde{a}ij}'b\} \mathds{1}\{S_{aij} > S_{\tilde{a}ij}\}  \\
			+ & \mathds{1}\{x_{aij}'b < x_{\tilde{a}ij}'b\} \mathds{1}\{S_{aij} < S_{\tilde{a}ij}\} \bigg] 
\end{align*}
Let $\mathcal{B}_{\rho} = \{ b: \abs{b^1} \geq \rho, \, \norm{b} = 1 \}$. Since the term in the second brackets is either 0 or 1, we have 
\begin{align*}
\sup_{b \in \mathcal{B}_{\rho}} \,  \abs{SQ_{1, \mathcal{A}_{ij}} (b; \sigma_{\mathcal{A}_{ij}}) - Q_{1, \mathcal{A}_{ij}} (b; \sigma_{\mathcal{A}_{ij}})}
& \leq {\abs{\mathcal{A}_{ij}} \choose 2}^{-1} \sum_{a \neq \tilde{a} \in \mathcal{A}_{ij}} 
 \abs{ K \bigg(  \scriptsize  \frac{ \norm{ z_a - z_{\tilde{a}}}}  {\sigma_{\mathcal{A}_{ij}}} \bigg) - \mathds{1}\{z_{a} = z_{\tilde{a}}\} } 
\end{align*}
For any $\alpha > 0$, define 
\begin{align*}
C_{1,\mathcal{A}_{ij}} (\alpha) & = {\abs{\mathcal{A}_{ij}} \choose 2}^{-1} \sum_{a \neq \tilde{a} \in \mathcal{A}_{ij}} 
  \abs{ K \bigg(  \scriptsize  \frac{ \norm{ z_a - z_{\tilde{a}}}}  {\sigma_{\mathcal{A}_{ij}}} \bigg) - \mathds{1}\{z_{a} = z_{\tilde{a}}\} } 
   \mathds{1}\{ \norm{z_{a} - z_{\tilde{a}}} \geq \alpha \} \\
C_{2,\mathcal{A}_{ij}} (\alpha) & = {\abs{\mathcal{A}_{ij}} \choose 2}^{-1} \sum_{a \neq \tilde{a} \in \mathcal{A}_{ij}} 
  \abs{ K \bigg(  \scriptsize  \frac{ \norm{ z_a - z_{\tilde{a}}}}  {\sigma_{\mathcal{A}_{ij}}} \bigg) - \mathds{1}\{z_{a} = z_{\tilde{a}}\} } 
   \mathds{1}\{ \norm{z_{a} - z_{\tilde{a}}} < \alpha \} 
\end{align*}
Then we know that 
\begin{align*}
\sup_{b \in \mathcal{B}_{\rho}} \,  \abs{SQ_{1, \mathcal{A}_{ij}} (b; \sigma_{\mathcal{A}_{ij}}) - Q_{1, \mathcal{A}_{ij}} (b; \sigma_{\mathcal{A}_{ij}})} \leq C_{1,\mathcal{A}_{ij}} (\alpha) + C_{2,\mathcal{A}_{ij}} (\alpha) 
\end{align*}
First, let's try to bound $C_{1,\mathcal{A}_{ij}} (\alpha)$. When $\norm{z_{a} - z_{\tilde{a}}} \geq \alpha$, assumption \ref{as:K} and \ref{as:bandwidth} imply that $ K \left(  \scriptsize  \frac{ \norm{ z_a - z_{\tilde{a}}}}  {\sigma_{\mathcal{A}_{ij}}} \right) \rightarrow 0$ as $\abs{\mathcal{A}_{ij}} \rightarrow \infty$. Moreover, we have $ \mathds{1}\{z_{a} = z_{\tilde{a}}\}=0$, so $C_{1,\mathcal{A}_{ij}} (\alpha) \rightarrow 0$ as $\abs{\mathcal{A}_{ij}} \rightarrow \infty$ for any $\alpha >0$ and $b \in \mathcal{B}_{\rho}$.

Now let's try to bound $C_{2,\mathcal{A}_{ij}} (\alpha)$. Since $K$ is bounded from assumption \ref{as:K}, we can find a finite $M$ such that
\begin{align}
C_{2,\mathcal{A}_{ij}} (\alpha) & \leq M  {\abs{\mathcal{A}_{ij}} \choose 2}^{-1} \sum_{a \neq \tilde{a} \in \mathcal{A}_{ij}}  \mathds{1}\{ \norm{z_{a} - z_{\tilde{a}}} < \alpha \} \\
& \leq M {\abs{\mathcal{A}_{ij}} \choose 2}^{-1} \sum_{a \neq \tilde{a} \in \mathcal{A}_{ij}}  \mathds{1}\{ \abs{z_{ai}^{1} - z_{\tilde{a}i}^{1}} < \alpha \} \label{eq:C2}
\end{align}
As $\alpha \rightarrow 0$, equation \eqref{eq:C2} becomes $M  {\abs{\mathcal{A}_{ij}} \choose 2}^{-1} \sum_{a \neq \tilde{a} \in \mathcal{A}_{ij}}  \mathds{1}\{ z_{ai}^{1} - z_{\tilde{a}i}^{1} =0 \}$. It is straightforward to show that assumption \ref{as:support} (b) implies ${\abs{\mathcal{A}_{ij}} \choose 2}^{-1} \sum_{a \neq \tilde{a} \in \mathcal{A}_{ij}}  \mathds{1}\{ z_{ai}^{1} - z_{\tilde{a}i}^{1} =0 \} \rightarrow 0$ as $\abs{\mathcal{A}_{ij}} \rightarrow \infty$. Thus, $C_{2,\mathcal{A}_{ij}} (\alpha)$ can be made arbitrarily small by picking a sufficiently small $\alpha$ and sufficiently large $\abs{\mathcal{A}_{ij}}$. This implies that
\begin{align*}
\lim_{\abs{\mathcal{A}_{ij}} \rightarrow \infty} \: \sup_{b \in \mathcal{B}_{\rho}} \,  \abs{SQ_{1, \mathcal{A}_{ij}} (b; \sigma_{\mathcal{A}_{ij}}) - Q_{1, \mathcal{A}_{ij}} (b; \sigma_{\mathcal{A}_{ij}})} = 0
\end{align*}
\end{proof}

Let $SQ_{2, \mathcal{A}_{ij}} (m; \sigma_{\mathcal{A}_{ij}})$ and   $SQ_{3, \mathcal{A}_{ij}} (m; \sigma_{\mathcal{A}_{ij}})$ be defined analogously to $SQ_{1, \mathcal{A}_{ij}} (b; \sigma_{\mathcal{A}_{ij}})$. Then we have
\begin{lemma} \label{lem:conv2}
$SQ_{2, \mathcal{A}_{ij}} (m; \sigma_{\mathcal{A}_{ij}}) + SQ_{3, \mathcal{A}_{ij}} (m; \sigma_{\mathcal{A}_{ij}})$ converges uniformly almost surely to \\ $Q_{2, \mathcal{A}_{ij}} (m; \sigma_{\mathcal{A}_{ij}}) + Q_{3, \mathcal{A}_{ij}} (m; \sigma_{\mathcal{A}_{ij}})$.
\end{lemma}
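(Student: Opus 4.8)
The plan is to reduce Lemma~\ref{lem:conv2} to two copies of the argument already carried out for Lemma~\ref{lem:conv}. Writing $\mathcal{M}_\rho = \{m : \norm{m}=1,\ \abs{m^1}\geq\rho\}$ for the relevant constraint set and denoting by $Q_{2,\mathcal{A}_{ij}}, Q_{3,\mathcal{A}_{ij}}$ the exact-matching empirical counterparts, the triangle inequality gives
\begin{align*}
&\sup_{m\in\mathcal{M}_\rho}\abs{\big(SQ_{2,\mathcal{A}_{ij}}+SQ_{3,\mathcal{A}_{ij}}\big)-\big(Q_{2,\mathcal{A}_{ij}}+Q_{3,\mathcal{A}_{ij}}\big)} \\
&\qquad\leq \sup_{m\in\mathcal{M}_\rho}\abs{SQ_{2,\mathcal{A}_{ij}}-Q_{2,\mathcal{A}_{ij}}} + \sup_{m\in\mathcal{M}_\rho}\abs{SQ_{3,\mathcal{A}_{ij}}-Q_{3,\mathcal{A}_{ij}}},
\end{align*}
so it is enough to show that each summand on the right vanishes almost surely. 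I would prove this for the second term in detail and note that the third is identical after interchanging the roles of $z_{ai}$ and $z_{aj}$.

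For the second term, set $v_a=(z_{ai},x_{aij})$ and observe that $SQ_{2,\mathcal{A}_{ij}}-Q_{2,\mathcal{A}_{ij}}$ is the average over pairs of $\big[K(\norm{v_a-v_{\tilde{a}}}/\sigma_{\mathcal{A}_{ij}})-\mathds{1}\{v_a=v_{\tilde{a}}\}\big]$ times the $\{0,1\}$-valued bracket $\mathds{1}\{z_{aj}'m>z_{\tilde{a}j}'m\}\mathds{1}\{S_{aij}>S_{\tilde{a}ij}\}+\mathds{1}\{z_{aj}'m<z_{\tilde{a}j}'m\}\mathds{1}\{S_{aij}<S_{\tilde{a}ij}\}$. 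Since that bracket is bounded by one, $\sup_m\abs{SQ_{2,\mathcal{A}_{ij}}-Q_{2,\mathcal{A}_{ij}}}$ is dominated by ${\abs{\mathcal{A}_{ij}}\choose 2}^{-1}\sum_{a\neq\tilde{a}}\abs{K(\norm{v_a-v_{\tilde{a}}}/\sigma_{\mathcal{A}_{ij}})-\mathds{1}\{v_a=v_{\tilde{a}}\}}$, which I split at a threshold $\alpha>0$ into $C_1(\alpha)$ over $\norm{v_a-v_{\tilde{a}}}\geq\alpha$ and $C_2(\alpha)$ over $\norm{v_a-v_{\tilde{a}}}<\alpha$. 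Exactly as in Lemma~\ref{lem:conv}, Assumptions~\ref{as:K} and~\ref{as:bandwidth} force the kernel to zero on the first region while the exact-match indicator is already zero there, so $C_1(\alpha)\to0$; boundedness of $K$ gives $C_2(\alpha)\leq M\,{\abs{\mathcal{A}_{ij}}\choose 2}^{-1}\sum_{a\neq\tilde{a}}\mathds{1}\{\norm{v_a-v_{\tilde{a}}}<\alpha\}$.

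The only genuine difference from Lemma~\ref{lem:conv} is the choice of continuously distributed coordinate used to send $C_2(\alpha)$ to zero. There $z_{ai}^1$ was available, but here both matched vectors share the block $x_{aij}$, so I would instead use $x_{aij}^1$: since $\norm{v_a-v_{\tilde{a}}}<\alpha$ implies $\abs{x_{aij}^1-x_{\tilde{a}ij}^1}<\alpha$, we get $C_2(\alpha)\leq M\,{\abs{\mathcal{A}_{ij}}\choose 2}^{-1}\sum_{a\neq\tilde{a}}\mathds{1}\{\abs{x_{aij}^1-x_{\tilde{a}ij}^1}<\alpha\}$. Letting $\alpha\to0$ this average tends to the fraction of pairs with $x_{aij}^1=x_{\tilde{a}ij}^1$, which Assumption~\ref{as:support}(b)---the everywhere positive density of $x_{aij}^1-x_{\tilde{a}ij}^1$---drives to zero almost surely; choosing $\alpha$ small and $\abs{\mathcal{A}_{ij}}$ large then makes the full bound arbitrarily small. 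The third term, whose matched vector is $(z_{aj},x_{aij})$, is handled verbatim with the same coordinate $x_{aij}^1$.

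The step I expect to be the main obstacle is precisely this choice of continuous coordinate. Assumption~\ref{as:support}(b) guarantees continuity of the search-cost regressor only for product $i$ (the $z_{ai}^1$ used in Lemma~\ref{lem:conv}), so one cannot simply reuse a $z$-coordinate for the third term, whose matched block involves $z_{aj}$. The resolution is to notice that $x_{aij}$---and hence its continuously distributed first coordinate $x_{aij}^1$---appears in the matched vector of both the second and third terms, so a single coordinate furnished by Assumption~\ref{as:support}(b) serves for both pieces and no extra hypothesis on $z_{aj}$ is required.
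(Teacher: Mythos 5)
Your proposal is correct and takes essentially the same route as the paper, whose own proof of this lemma is simply ``similar arguments as in the proof of Lemma \ref{lem:conv}''; you carry out exactly that reduction via the triangle inequality and the split into $C_1(\alpha)$ and $C_2(\alpha)$. The one detail the paper leaves implicit --- that the continuously distributed coordinate used to kill $C_2(\alpha)$ must be taken to be $x_{aij}^1$ (available in both matched blocks by Assumption \ref{as:support}(b)) rather than a $z_{aj}$-coordinate, for which no continuity is assumed --- is correctly identified and resolved in your writeup.
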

\begin{proof}
Similar arguments as in proof of lemma \ref{lem:conv}.
\end{proof}

Combining lemma \ref{lem:conv} and \ref{lem:conv2}, we have verified that condition (iv) holds.

\section{More Tables} \label{sec:tables}

\begin{table}
\begin{threeparttable}[H]
\caption{Summary Statistics for All Destinations, \\ Random Ranking vs. Non-Random Ranking}
\label{tab:sumstat788}
\begin{tabular}{lcccccc}
\hline \hline
\textit{Random ranking sample} &  \\
& Mean & SD & Median & Min & Max & Obs \\
\textit{Hotel level} &  \\
\quad \text{Price} (\$100)&1.72&1.14&1.41&0.10&10.00&1,357,106 \\
\quad \text{Star rating} &3.34&0.89&3.00&1.00&5.00&1,357,106 \\
\quad \text{Review score} &3.81&0.97&4.00&0.00&5.00&1,357,106 \\
\quad \text{Chain} &0.62&0.48&1.00&0.00&1.00&1,357,106 \\
\quad \text{Location score} &3.26&1.53&3.33&0.00&6.98&1,357,106 \\
\quad \text{Promotion} &0.24&0.43&0.00&0.00&1.00&1,357,106  \\
\quad \text{Position} &17.40&10.48&16.00&1.00&40.00&1,357,106 \\
\textit{Impression level} &  \\
\quad \text{Number of hotels displayed} &26.35&8.46&31.00&5.00&38.00&51,510 \\
\quad \text{Booking window (days)} &53.67&62.49&31.00&0.00&498.00&51,510 \\
\quad \text{Click} &1.14&0.66&1.00&1.00&25.00&51,510 \\
\quad \text{Purchase} &0.08&0.27&0.00&0.00&1.00&51,510 \\ \hline 
\textit{Non-Random ranking sample} &  \\
& Mean & SD & Median & Min & Max & Obs \\
\textit{Hotel level} &  \\
\quad \text{Price} (\$100) &1.55&0.97&1.29&0.10&10.00&3,145,937  \\
\quad \text{Star rating} &3.31&0.87&3.00&1.00&5.00&3,145,937  \\
\quad \text{Review score}  &3.92&0.80&4.00&0.00&5.00&3,145,937  \\
\quad \text{Chain} &0.68&0.47&1.00&0.00&1.00&3,145,937  \\
\quad \text{Location score} &3.13&1.50&3.04&0.00&6.98&3,145,937  \\
\quad \text{Promotion} &0.26&0.44&0.00&0.00&1.00&3,145,937  \\
\quad \text{Position} &17.74&10.56&18.00&1.00&40.00&3,145,937  \\
\textit{Impression level} &  \\
\quad \text{Number of hotels displayed} &27.47&7.91&31.00&5.00&38.00&114,526 \\
\quad \text{Booking window (days)} &32.78&48.16&14.00&0.00&482.00&114,526  \\
\quad \text{Click} &1.11&0.59&1.00&1.00&24.00&114,526 \\
\quad \text{Purchase} &0.92&0.28&1.00&0.00&1.00&114,526 \\ \hline \hline
\end{tabular}
\begin{tablenotes}[normal,flushleft]
\item \textit{Notes} This table shows the summary statistics of all of the 788 destinations of the Expedia data set, for both the random ranking sample and the non-random ranking sample. 
\end{tablenotes}
\end{threeparttable}
\end{table}

\end{appendices}

\end{document}